\newtheorem{example}{Example}
\newtheorem{theorem}{Theorem}
\title{Forward \LTLf Synthesis: DPLL At Work}
\author{%
    Marco Favorito\footnote{The author's views are his own, and they do not reflect those of his employer.}
    \affiliations
    Bank of Italy
    \emails
    marco.favorito@gmail.it    \\
    marco.favorito@bancaditalia.it    % email
}
\newcommand{\myi}{(\emph{i})\xspace}
\newcommand{\myii}{(\emph{ii})\xspace}
\newcommand{\myiii}{(\emph{iii})\xspace}
 \newcommand{\B}{\mathcal{B}}
\newcommand{\I}{\mathcal{I}}
\renewcommand{\O}{\mathcal{O}} \renewcommand{\P}{\mathcal{P}}
 \newcommand{\R}{\mathcal{R}}
\renewcommand{\S}{\mathcal{S}}
\newcommand{\V}{\mathcal{V}}
\newcommand{\U}{\mathcal{U}}
 \newcommand{\X}{\mathcal{X}}
\newcommand{\Y}{\mathcal{Y}} \newcommand{\Z}{\mathcal{Z}}
\newcommand{\Wnext}{\raisebox{-0.27ex}{\LARGE$\bullet$}}
\newcommand{\Next}{\raisebox{-0.27ex}{\LARGE$\circ$}}
\newcommand{\lUntil}{\mathop{\U}}
\newcommand{\Release}{\mathop{\R}}
\newcommand{\true}{\mathit{true}}
\newcommand{\false}{\mathit{false}}
\newcommand{\ttrue}{\mathtt{true}}
\newcommand{\ffalse}{\mathtt{false}}
\newcommand{\LTL}{{\sc ltl}\xspace}
\newcommand{\LTLf}{{\sc ltl}$_f$\xspace}
\newcommand{\ltlf}{{\sc ltl}$_f$\xspace}
\newcommand{\DFA}{{\sc dfa}\xspace}
\newcommand{\NNF}{{\sc nnf}\xspace}
\newcommand{\Nat}{{\rm I\kern-.23em N}}
\renewcommand{\ttrue}{\mathit{tt}}
\renewcommand{\ffalse}{\mathit{ff}}
\newcommand{\citealt}[1]{\citeauthor{#1} \citeyear{#1}}
\newcommand{\PA}{{\mathsf{pa}}\xspace}
\newcommand{\pa}{{\mathsf{pa}}\xspace}
\newcommand{\cl}{{\mathsf{cl}}\xspace}
\newcommand{\xnf}{{\mathsf{xnf}}\xspace}
\newcommand{\XNF}{{\sc xnf}\xspace}
\newcommand{\stripnext}{\textsc{RmNext}\xspace}
\newcommand{\fp}{{\mathsf{fp}}\xspace}
\newcommand{\expand}{\textsc{Expand}\xspace}
\newcommand{\sddprime}{\ensuremath{\mathsf{prime}\xspace}}
\newcommand{\sddsub}{\ensuremath{\mathsf{sub}\xspace}}
\newcommand*\Let[2]{\State #1 $\gets$ #2}
\newcommand{\pt}{{\mathsf{path}}\xspace}
\newcommand{\andnode}{\textit{AndNd}}
\newcommand{\lisa}{{$\mathsf{Lisa}$}\xspace}
\newcommand{\lydia}{{$\mathsf{Lydia}$}\xspace}
\newcommand{\cynthia}{{$\mathsf{Cynthia}$}\xspace}
\newcommand{\ltlfsyn}{{$\mathsf{Ltlfsyn}$}\xspace}
\newcommand{\nike}{{$\mathsf{Nike}$}\xspace}
\newcommand{\nikeparallel}{{$\mathsf{Nike}\text{-}\mathsf{P}$}\xspace}
\newcommand{\syfco}{{$\mathsf{Syfco}$}\xspace}
\newtheorem{lemma}{Lemma}
\newtheorem{proposition}{Proposition}
\newcommand{\EquivalenceCheck}{\textsc{EquivalenceCheck}\xspace}
\newcommand{\GetArcs}{\textsc{GetArcs}\xspace}
\newcommand{\InPath}{\textsc{InPath}\xspace}
\newcommand{\FormulaOfNode}{\textsc{FormulaOfNode}\xspace}
\newcommand{\BddRepresentation}{\textsc{BddRepresentation}\xspace}
\newcommand{\BddBasedEqCheck}{\textsc{BddBasedEqCheck}\xspace}
\newcommand{\DpllGetArcs}{\textsc{DpllGetArcs}\xspace}
\newcommand{\DpllGetOrArcs}{\textsc{DpllGetOrArcs}\xspace}
\newcommand{\DpllGetAndArcs}{\textsc{DpllGetAndArcs}\xspace}
\newcommand{\GetBranchingLiteral}{\textsc{GetBranchingLiteral}\xspace}
\newcommand{\Replace}{\textsc{Replace}\xspace}
\newcommand{\yield}{\ensuremath{\mathbf{yield}}\xspace}
\newcommand{\yieldfrom}{\ensuremath{\mathbf{yield\ from}}\xspace}
\newcommand{\movetype}{\ensuremath{\mathsf{move}}\xspace}
\newcommand{\nodetype}{\ensuremath{\mathsf{node}}\xspace}
\newcommand{\truefirst}{\ensuremath{\mathsf{TF}}\xspace}
\newcommand{\falsefirst}{\ensuremath{\mathsf{FF}}\xspace}
\newcommand{\randomstrategy}{\ensuremath{\mathsf{Rand}}\xspace}
\newcommand{\bddbasedeqcheck}{\ensuremath{\mathsf{BDD}}\xspace}
\newcommand{\hashconsingbasedeqcheck}{\ensuremath{\mathsf{Hash}}\xspace}
\newcommand{\HashConsingBasedEqCheck}{\textsc{HashConsingEqCheck}\xspace}
\newcommand{\tf}{\ensuremath{\mathsf{tf}}\xspace}
\algrenewcommand\algorithmicindent{0.4cm}%
\newcommand{\Ftt}{\ensuremath{\Diamond true}\xspace}
\newcommand{\Gff}{\ensuremath{\Box false}\xspace}
\tikzset{->,
    >=stealth,
    node distance=3cm,
    every state/.style={thick}
}
\begin{document}

\maketitle

\begin{abstract}
This paper proposes a new AND-OR graph search framework for synthesis of Linear Temporal Logic on finite traces (\LTLf), that overcomes some limitations of previous approaches. Within such framework, we devise a procedure inspired by the Davis-Putnam-Logemann-Loveland (DPLL) algorithm to generate the next available agent-environment moves in a truly depth-first fashion, possibly avoiding exhaustive enumeration or costly compilations. We also propose a novel equivalence check for search nodes based on syntactic equivalence of state formulas. Since the resulting procedure is not guaranteed to terminate, we identify a stopping condition to abort execution and restart the search with state-equivalence checking based on Binary Decision Diagrams (BDD), which we show to be correct. The experimental results show that in many cases the proposed techniques outperform other state-of-the-art approaches.
Our implementation \nike competed in the \LTLf Realizability Track in the 2023 edition of SYNTCOMP, and won the competition.
% 
% Recent works on Linear Temporal Logic on finite traces (\LTLf) synthesis introduced forward-search techniques
% to cope with the high intractability of the problem, which could avoid the construction of the entire DFA needed for backward fixpoint computation, since they can stop as soon as a solution is found.
% %
% 
% differentiates from existing approaches in the way next agent-environment moves are generated and in the equivalence checking between search nodes, aiming at reducing the space complexity of the search algorithm while not sacrificing efficient identification of available moves.
% %
% We rely on a procedure inspired by the Davis-Putnam-Logemann-Loveland (DPLL) algorithm to compile the next agent-environmentmoves in a truly depth-first fashion (possibly avoiding exhaustive enumeration)
% %
% and Binary Decision Diagrams (BDD) to symbolically encode search nodes and for constant time equivalence checking.
% %
% We highlight the differences with previous works in the same research direction and aim to find the right balance in the space-time trade-off between a potentially redundant computation and a space-demanding knowledge compilation process.
% %
% Our experimental results show that in many cases the proposed technique outperforms other competing approaches.
\end{abstract}

\section{Introduction}
\emph{Program synthesis} is the task of finding a program that provably satisfies a given high-level formal specification~\cite{church1963application}.
A commonly used logic for program synthesis is Linear Temporal
Logic (\LTL) \cite{Pnu77,PnueliR89}, typically used also in model checking~\cite{BaierKatoenBook08}.
\emph{\LTL on finite traces} (\LTLf) \cite{DegVa13}, a variant of \LTL to specify \emph{finite}-horizon temporal properties, has been recently proposed as specification language for temporal synthesis~\cite{DegVa15}. 
The \LTLf synthesis setting considers a set of variables controllable by the agent, a (disjoint) set of variables controlled by the environment, and a \LTLf specification that specifies which finite traces over such variables are desirable.
The problem of \LTLf synthesis consists in finding a finite-state controller 
% (i.e. the program) 
that, at every time step, given the values of the environment variables in the history so far, sets the next values for each agent proposition such that the generated traces comply with the \LTLf specification.

The basic technique for solving \LTLf synthesis amounts to constructing a deterministic finite automaton (\DFA) corresponding to the \LTLf specification,
and then considering it as a game arena where the agent tries
to get to an accepting state regardless of the environment's moves. 
% Then, 
A \emph{winning strategy}, i.e. a finite controller returned by the 
% synthesis 
procedure, can be obtained through a backward fixpoint computation for \emph{adversarial reachability} of the \DFA accepting state.

\noindent
\textbf{Related works.} 
State-of-the-art tools such as \lydia~\cite{GiacomoF21} and \lisa~\cite{lisa} are based on the classical approach.
The main drawback of this technique is that it requires to compute the entire \DFA of the \LTLf specification, which in the worst case can be doubly exponential in the size of the formula. Therefore, the \DFA construction step becomes the main bottleneck.

A natural idea is to consider a forward search approach that expands the arena on-the-fly while searching for a solution, possibly avoiding the construction of the entire arena.
Forward-based approaches are at the core of the best solution methods designed for other AI problems:  
Planning with fully observable non-deterministic domains (FOND)~\cite{ghallab2004automatedplanning,GeBo13,CimattiRT98,Cimatti03}, where the agent has to reach the goal, despite that the environment may choose adversarially the effects of the agent actions, and Planning in partially observable nondeterministic domains (POND), also known as \emph{contingent planning}, where the search procedure must be performed over the \emph{belief-states}~\cite{Reif84,GoBo96,BertoliCRT06}. However, techniques developed for such problems cannot be applied to ours: in a FOND planning problem, represented with PDDL~\cite{2019Haslum}, the search space is at most single-exponential~\cite{Rintanen04a}, whereas for \LTLf synthesis the state space can be of double-exponential size wrt the size of the formula;
hence, we do not rely on an encoding into PDDL, as~\cite{CamachoBMM18,CamachoM19}, which may result in a PDDL specification with exponential size. 
In 
% a POND planning problem, 
POND planning,
despite the double-exponential size of the state space,
belief-states have a specific structure~\cite{BertoliCRT06,ToPS09}, 
% and therefore techniques for solving it 
so their solution techniques
cannot be directly applied to \LTLf synthesis.
% the techniques utilized in contingent planning cannot be directly applied to \LTLf synthesis.

For these reasons, researchers have been looking into forward search techniques specifically conceived for solving \LTLf synthesis.
Two notable attempts in this direction have been presented in \cite{XiaoLZSPV21} and \cite{GiacomoFLVX022}. The former work presents an on-the-fly synthesis approach via conducting a so-called Transition-based Deterministic Finite Automata (TDFA) game, where the acceptance condition is defined on transitions, instead of states. The main issue of that approach is the full enumeration of agent-environment moves, which are exponentially many in the number of variables. Moreover, due to the fact that the acceptance condition is defined on transitions, every generated transition has to be checked for acceptance. 
% Furthermore, every state is represented explicitly as a Boolean formula, and thus the propositional equivalence check between two states leads to heavy cost, and there can be, in the worst case, doubly exponential number of states. Although both of the acceptance check and the propositional equivalence check can be done by utilizing sophisticated SAT-based techniques, the SAT solver is invoked in every single step, leading to unavoidable resource consumption.
% 
The latter work instead proposes a search framework for \LTLf synthesis, where the \DFA arena is seen as an AND-OR graph, and the available moves are found according to the formula associated with the current search node, by means of a Knowledge Compilation (KC) technique: Sentential Decision Diagrams (SDD)~\cite{Darwiche11}. 
Notably, they are able to branch on propositional formulas, representing several evaluations, instead of individual ones. This can drastically reduce the branching factor.
Nevertheless, for certain types of problem instances, the approach can get stuck with demanding compilations of the state formulas, needed \emph{both} for state equivalence checking and for search node expansion. Moreover, the requirement of having irreducible representation of agent-environment moves can be of little usefulness if the branching factor of the search problem is already high, hence resulting in an even greater compilation overhead. 

We think there is the need of a search approach that scales well with the increase of computational power, and that uses such power for actually exploring the search space, rather than wasting time either slavishly enumerating the exponentially many variable assignments, or by finding the minimal representation of the available search moves.

\noindent
\textbf{Contributions.} 
Our contributions are the following. First, we identify limitations of the previous AND-OR graph search framework, based on the \expand function, and propose a more general and versatile search framework for \LTLf synthesis, based on two primitive operations: \emph{state-equivalence checking} and \emph{search node expansion}. 
Then, we formalize and discuss two well-known instances of equivalence checks; one based on knowledge compilation, and the other on a computationally-cheap syntactical equivalence between state formulas.
Furthermore, we propose a novel search graph expansion technique, based on a procedure inspired by the famous Davis-Putnam-Logemann-Loveland (DPLL) algorithm.
Finally, we describe our implementation in a new tool called \nike, and compare its performance on known benchmarks with other state-of-the-art tools, showing its surprising effectiveness.
\nike won the \LTLf Realizability Track in the 2023 edition of SYNTCOMP \footnote{http://www.syntcomp.org/syntcomp-2023-results/}.

\section{Preliminaries}\label{sec:pre}
\noindent
\textbf{\LTLf Basics.}
Linear Temporal Logic over finite traces, called \LTLf~\cite{DegVa13} is a variant of Linear Temporal Logic~(\LTL)~\cite{BaierKatoenBook08} that is interpreted over finite traces rather than infinite traces~(as in \LTL). Given a set of propositions $\P$, the syntax of \ltlf is identical to \LTL, and defined as (wlog, we require that \LTLf formulas are in Negation Normal Form~(\NNF), i.e., negations only occur in front of atomic propositions): 
% \begin{align*}
%     \varphi ::= \ttrue \mid & \ffalse \mid p \mid \neg p \mid \varphi_1 \wedge \varphi_2 \mid \varphi_1 \vee \varphi_2 \mid  \Next \varphi \mid \Wnext \varphi  \\
%     & \mid \varphi_1 \lUntil \varphi_2 \mid \varphi_1 \Release \varphi_2.
% \end{align*}
$
\varphi ::= \ttrue \mid \ffalse \mid p \mid \neg p \mid \varphi_1 \wedge \varphi_2 \mid \varphi_1 \vee \varphi_2 \mid  \Next \varphi \mid \Wnext \varphi  \mid \varphi_1 \lUntil \varphi_2 \mid \varphi_1 \Release \varphi_2.
$
$\ttrue$ is always true, $\ffalse$ is always false; $p \in \P$ is an \textit{atom}, and $\lnot p$ is a \textit{negated atom} (a literal $l$ is an atom or the negation of an atom); $\land$~(And) and $\lor$~(Or) are the Boolean connectives; and 
$\Next$~(Next), $\Wnext$~(Weak Next), $\lUntil$~(Until) and $\Release$~(Release) are temporal connectives. We use the usual abbreviations $\true \equiv p \lor \lnot p$, $\false \equiv p\land \lnot p$, $\Diamond \varphi \equiv \true \lUntil \varphi$ and $\Box \varphi \equiv \false \Release \varphi$. %%
Also for convenience we consider traces $\rho \in (2^\P)^*$, i.e., we consider also empty traces $\epsilon$ as in~\cite{BrafmanGP18}. More specifically, a trace $\rho = \rho[0], \rho[1], \ldots \in (2^\P)^*$ is a finite sequence, where $\rho[i]~(0 \leq i < |\rho|)$ denotes the $i$-th interpretation of $\rho$, which can be considered as the set of propositions that are $true$ at instant $i$, and $|\rho|$ represents the length of $\rho$. 
We have that
$\epsilon \models \varphi$ if $\varphi$ is $\ttrue$, an $\Release$-formula or $\Wnext$-formula, hence $\epsilon \models \Box false$. $\epsilon \not\models\varphi$ if $\varphi$ is $\ffalse$, a literal, $\lUntil$-formula or $\Next$-formula, hence $\epsilon \not\models \Diamond \true$.

We consider the semantics of \LTLf as presented in~\cite{BrafmanGP18}, that also works for empty traces.
Given a finite trace $\rho$ and an \ltlf formula 
$\varphi$, we inductively define when $\varphi$ is $true$ for $\rho$ at point $i$ ($0 \leq i < |\rho|$), written $\rho, i \models \varphi$, as follows: 

\noindent- $\rho, i \models \ttrue$ and $\rho, i \not\models \ffalse$;

\noindent- $\rho, i \models p$ iff $p \in \rho[i]$, and $\rho, i \models \neg p$ iff $p \notin \rho[i]$;
	
\noindent- $\rho, i \models \varphi_1 \wedge \varphi_2$ iff $\rho,i \models \varphi_1$ and $\rho, i \models \varphi_2$;

\noindent- $\rho, i \models \varphi_1 \vee \varphi_2$ iff $\rho,i \models \varphi_1$ or $\rho, i \models \varphi_2$;
	
\noindent- $\rho, i \models \Next \varphi$ iff $0 \leq i < |\rho|-1$ and $\rho, i+1 \models \varphi$;
    
\noindent- $\rho, i \models \Wnext \varphi$ iff $0 \le i < |\rho|$ implies $\rho, i+1 \models \varphi$;

\noindent- $\rho, i \models \varphi_1 \lUntil \varphi_2$ iff there exists $j$ with $i\leq j < |\rho|$ such that $\rho,j\models \varphi_2$, and $\forall k.i \leq k < j$ we have $\rho, k \models \varphi_1$;
 
\noindent- $\rho, i \models \varphi_1 \Release \varphi_2$ iff for all $j$ with $i\leq j < |\rho|$ either we have $\rho,j\models \varphi_2$, or $\exists k.i \leq k < j$ we have $\rho, k \models \varphi_1$.

\noindent
An \LTLf formula $\varphi$ is $true$ for $\rho$, denoted by $\rho \models \varphi$, if and only if $\rho, 0 \models \varphi$. In particular, $\epsilon\models\Box false$ and $\epsilon\not\models\Diamond true$. 
% The set of finite traces that satisfy \LTLf formula $\varphi$ is the \emph{language} of $\varphi$, denoted as $\L(\varphi)=\{\rho\in(2^\P)^*\mid\rho\models\varphi\}$. 

We denote by $\cl(\varphi)$ the set of subformulas of $\varphi$, including $\ttrue$ and $\ffalse$. 
We denote by $\PA(\varphi) \subseteq \cl(\varphi)$ the set of literals and temporal subformulas of $\varphi$ whose primary connective is temporal~\cite{LiRPZV19}. Formally, for an \LTLf formula $\varphi$ in \NNF, we have $\PA(\varphi) = \{\varphi\}$ if $\varphi$ is a literal or temporal formula; and $\PA(\varphi) = \PA(\varphi_1) \cup \PA(\varphi_2)$ if $\varphi = (\varphi_1 \wedge \varphi_2)$ or $\varphi = (\varphi_1 \vee \varphi_2)$. 
Having \LTLf formula $\varphi$, replacing every temporal formula $\psi \in \PA(\varphi)$ with a propositional variable $a_\psi$ gives us a propositional formula $\varphi^p$; we call this operation \emph{propositionalization of $\varphi$}.
Note that $\varphi^p\in \B^+(\cl(\varphi))$, i.e. $\varphi^p$ is a positive Boolean formula over variables $\cl(\varphi)$.
Let $\phi=\varphi^p$, we denote with $\phi^{\tf}=\varphi$ the inverse operation of $\cdot^p$. Two formulas $\varphi_1$ and $\varphi_2$ are propositionally equivalent, denoted by $\varphi_1 \sim_p \varphi_2$, if,  $C \models \varphi_1^p \leftrightarrow C \models \varphi_2^p$ holds for every propositional assignment $C \in 2^{\pa(\varphi_1) \cup \pa(\varphi_2)}$. 
% The equivalence class of a formula $\psi \in \cl(\varphi)$ is denoted by $[\psi]_{\sim_p}$ and defined as $[\psi]_{\sim_p} = \{y \in \cl(\varphi) \mid \psi {\sim_p} y \}$. The quotient set of a subset $C \subseteq \cl(\varphi)$ is denoted by $C/_{\sim_p}$
% and defined as $C/_{\sim_p} = \{[\psi]_{\sim_p} \mid \psi \in C \}$.

An \LTLf formula $\varphi$ is in neXt Normal Form~(\XNF) if $\PA(\varphi)$ only includes literals, \Next- and \Wnext-formulas. For an \LTLf formula $\varphi$ in \NNF, we can obtain its \XNF by transformation function $\xnf(\varphi)$, defined as follows:

\noindent-$~\xnf(\varphi)\!=\!\varphi$ if $\varphi$ is a literal, $\Box false$, $\Diamond true$,  \Next-, \Wnext-formula; 

\noindent-$~\xnf(\varphi_1 \wedge \varphi_2) = \xnf(\varphi_1) \wedge \xnf(\varphi_2)$; 

\noindent-$~\xnf(\varphi_1 \vee \varphi_2) = \xnf(\varphi_1) \vee \xnf(\varphi_2)$; 

\noindent-$~\xnf(\varphi_1\!\lUntil\!\varphi_2)\!=\!(\xnf(\varphi_2) \wedge \Diamond \true) \vee (\xnf(\varphi_1) \wedge \Next(\varphi_1\!\lUntil\!\varphi_2))$; 

\noindent-$\xnf(\varphi_1\!\Release\!\varphi_2)\!=\!(\xnf(\varphi_2) \vee \Box \false) \wedge (\xnf(\varphi_1) \vee \Wnext(\varphi_1\!\Release\!\varphi_2))$

\noindent
Note that $\Diamond \true$~(resp.~$\Box \false$) guarantees that empty trace is not~(resp.~is) accepted by $\lUntil$-formulas~(resp.~$\Release$-formulas).
\begin{theorem}[\citealt{LiRPZV19}]\label{thm:xnf}
Every \LTLf formula $\varphi$ in \NNF can be converted, with linear time in the formula size, to an equivalent formula in \XNF, denoted by $\xnf(\varphi)$.
\end{theorem}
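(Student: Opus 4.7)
The plan is to establish three properties of the recursive transformation $\xnf$ by structural induction on the \NNF formula $\varphi$: (i) the output lies in \XNF, (ii) it is semantically equivalent to the input on every (possibly empty) trace, and (iii) the construction can be arranged to run in linear time in $|\varphi|$.

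For (i), I would induct on $\varphi$. Base cases (literals, $\Diamond\true$, $\Box\false$, $\Next$- and $\Wnext$-formulas) are immediate since $\xnf$ is the identity and these belong to $\PA$-targets. For $\varphi_1\wedge\varphi_2$ and $\varphi_1\vee\varphi_2$, by induction the subresults are in \XNF, and a Boolean combination of \XNF formulas is again in \XNF because $\PA$ distributes over $\wedge,\vee$. For $\varphi_1\lUntil\varphi_2$, the output is a Boolean combination of $\xnf(\varphi_1),\xnf(\varphi_2)$ (in \XNF by IH), the constant $\Diamond\true$ (allowed as a base case), and $\Next(\varphi_1\lUntil\varphi_2)$, whose primary connective is $\Next$; similarly for $\Release$ with $\Box\false$ and a $\Wnext$. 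So every temporal subformula of the result at the top of $\PA$ is a literal, $\Next$- or $\Wnext$-formula.

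For (ii), I would again induct on $\varphi$, splitting each inductive step on whether the trace $\rho$ is empty or not. The atomic, Boolean, $\Next$ and $\Wnext$ cases are either trivial or follow directly from IH. The hard part is the $\lUntil$ and $\Release$ cases, and specifically the interaction with empty traces. For $\varphi_1\lUntil\varphi_2$: on a non-empty trace the standard finite-trace expansion $\varphi_1\lUntil\varphi_2 \equiv \varphi_2 \vee (\varphi_1 \wedge \Next(\varphi_1\lUntil\varphi_2))$ unfolds the semantics at $i=0$ (either $j=0$ giving $\varphi_2$, or $j>0$ giving $\varphi_1$ now and $\varphi_1\lUntil\varphi_2$ in the suffix), and IH lets me replace $\varphi_k$ by $\xnf(\varphi_k)$ on the right; the guard $\Diamond\true$ is vacuously true there. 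On the empty trace, the left-hand side is false by definition, and on the right-hand side both $\Diamond\true$ and $\Next(\cdot)$ are false, so the whole disjunction is false. The $\Release$ case is dual: on a non-empty trace use $\varphi_1\Release\varphi_2 \equiv \varphi_2 \wedge (\varphi_1 \vee \Wnext(\varphi_1\Release\varphi_2))$, with the $\Box\false$ disjunct vacuously false; on the empty trace the left is true, and on the right $\Box\false$ and $\Wnext(\cdot)$ are both true, making both conjuncts true. The main obstacle is keeping the bookkeeping tight here — in particular, arguing that the $\Diamond\true$ and $\Box\false$ guards exactly compensate for the non-classical empty-trace base cases without changing anything on non-empty traces.

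For (iii), I would represent formulas as DAGs with maximal sharing and memoize $\xnf$ on subformulas; each node of $\varphi$ is then visited a constant number of times, so total work is $O(|\varphi|)$. The single subtlety is that $\xnf(\varphi_1\lUntil\varphi_2)$ re-uses the unchanged pointer to $\varphi_1\lUntil\varphi_2$ inside the $\Next$ (and analogously for $\Release$ inside the $\Wnext$); with DAG sharing this costs $O(1)$ per occurrence. Combined with (i) and (ii), this yields a linear-time translation to an equivalent \XNF formula, which is the claim.
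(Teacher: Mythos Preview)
The paper does not prove this theorem; it is stated with attribution to \citealt{LiRPZV19} and used as a black box. Your structural-induction sketch is the standard argument and is essentially correct: part~(i) is routine, part~(ii) is the expected unfolding argument with the $\Diamond\true$/$\Box\false$ guards handling the empty-trace corner cases correctly, and part~(iii) is the usual DAG-with-memoization observation. One small point worth tightening: in~(ii) you phrase the non-empty case as ``unfolds the semantics at $i=0$'', but for the induction hypothesis to apply to $\xnf(\varphi_1)$ and $\xnf(\varphi_2)$ you really want the stronger invariant $\rho,i\models\varphi \Leftrightarrow \rho,i\models\xnf(\varphi)$ for all $i$ (equivalently, equivalence on all suffixes), not just at the root; this is what your IH must say, and the argument goes through unchanged once stated that way.
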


\noindent
\textbf{\LTLf Formula Progression~\cite{GiacomoFLVX022}.}
Consider an \LTLf formula $\varphi$ over $\P$ and a finite trace $\rho = \rho[0], \rho[1], \ldots \in (2^\P)^*$, in order to have $\rho \models \varphi$, we can start from $\varphi$, progress or push $\varphi$ through $\rho$. The idea behind \emph{formula progression} is to split an \LTLf formula $\varphi$ into a requirement about \emph{now} $\rho[i]$, which can be checked straightaway, and a requirement about the future that has to hold in the yet unavailable suffix. That is to say, formula progression looks at $\rho[i]$ and $\varphi$, and progresses a new formula $\fp(\varphi, \rho[i])$ such that $\rho, i \models \varphi$ iff $\rho, i+1 \models \fp(\varphi, \rho[i])$. This procedure is analogous to DFA reading trace $\rho$, where reaching accepting states is essentially achieved by taking one transition after another. Formula progression has been studied in prior work, cf.~\cite{Emerson1990TemporalAM,BacchusK98}. 
Here we use the formalization provided in ~\cite{GiacomoFLVX022}.

Note that, since $\rho$ is a finite trace, it is necessary to clarify when the trace ends. To do so, two new formulas are introduced: $\Box\false$ and $\Diamond\true$, which, intuitively, refer to \emph{finite trace ends} and \emph{finite trace not ends}, respectively. For simplicity, we enrich $\cl(\varphi)$, the set of proper subformulas of $\varphi$, to include them such that $\cl(\varphi)$ is reloaded as $\cl(\varphi) \cup \cl(\Diamond\true) \cup \cl(\Box\false)$. 

\noindent
For an \LTLf formula $\varphi$ in \NNF, the \emph{progression function} $\fp(\varphi, \sigma)$, where $\sigma \in 2^\P$, is defined as follows:

\noindent- $\fp(\ttrue, \sigma) = \ttrue$ and $\fp(\ffalse, \sigma) = \ffalse$;

\noindent- $\fp(p, \sigma) = \ttrue$ if $p \in \sigma$, otherwise $\ffalse$;

\noindent- $\fp(\Diamond(true),\sigma) = \ttrue$ and $\fp(\Box(false),\sigma) = \ffalse$;

\noindent- $\fp(\neg p, \sigma) = \ttrue$ if $p \notin \sigma$, otherwise $\ffalse$;

\noindent- $\fp(\varphi_1 \wedge \varphi_2, \sigma) = \fp(\varphi_1, \sigma) \wedge \fp(\varphi_2, \sigma)$;

\noindent- $\fp(\varphi_1 \vee \varphi_2, \sigma) = \fp(\varphi_1, \sigma) \vee \fp(\varphi_2, \sigma)$;

\noindent- $\fp(\Next \varphi, \sigma) = \varphi \wedge \Diamond\true$, and $\fp(\Wnext \varphi, \sigma) = \varphi \vee \Box\false$;

\noindent-$\fp(\varphi_1\!\lUntil\!\varphi_2,\sigma)\!=\!\fp(\varphi_2, \sigma)\vee(\fp(\varphi_1,\sigma)\wedge\fp(\Next(\varphi_1\!\lUntil\!\varphi_2), \sigma))$;

\noindent-$\fp(\varphi_1\!\Release\!\varphi_2,\sigma)\!=\!\fp(\varphi_2, \sigma)\wedge(\fp(\varphi_1,\sigma)\vee\fp(\Wnext(\varphi_1\!\Release\!\varphi_2),\sigma))$

\noindent
Note that $\fp(\varphi, \sigma)$ is a positive Boolean formula on $\cl(\varphi)$, i.e.,
$\fp(\varphi, \sigma)\!\in \!\B^+(\cl(\varphi))$.
The following two propositions show that $\fp(\varphi, \sigma)$ strictly follows \LTLf semantics and retains the propositional behavior of \LTLf formulas.
\begin{proposition}[\citealt{GiacomoFLVX022}]\label{prop:fprog}
    Let $\varphi$ be an \LTLf formula over $\P$ in \NNF, $\rho$ be a finite nonempty trace, $\fp(\varphi, \sigma)$ be as above. 
    Then
    $\rho, i \models \varphi$ iff $\rho, i+1 \models \fp(\varphi, \rho[i])$.
\end{proposition}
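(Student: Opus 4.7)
The plan is to proceed by structural induction on $\varphi$, showing the biconditional $\rho, i \models \varphi \iff \rho, i+1 \models \fp(\varphi, \rho[i])$ for every $i$ with $0 \le i < |\rho|$. The base cases ($\ttrue$, $\ffalse$, literals $p$ and $\neg p$, $\Diamond\true$, $\Box\false$) reduce to a direct inspection: a literal $l$ satisfies $\rho, i \models l$ iff $l$ holds in $\rho[i] = \sigma$, in which case $\fp(l, \sigma) = \ttrue$, and by the semantic conventions for $\epsilon$ stated before the proposition, $\rho, i+1 \models \ttrue$ always and $\rho, i+1 \not\models \ffalse$ always, regardless of whether $i+1$ equals $|\rho|$. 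The Boolean cases $\varphi_1 \wedge \varphi_2$ and $\varphi_1 \vee \varphi_2$ follow immediately by applying the induction hypothesis to each component.

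The \Next{} and \Wnext{} cases are handled by recognizing $\Diamond\true$ and $\Box\false$ as markers for \emph{trace does not end} and \emph{trace has ended}. Specifically, $\rho, i+1 \models \Diamond\true$ holds iff $i+1 < |\rho|$, so $\rho, i+1 \models \varphi \wedge \Diamond\true$ matches exactly the condition $i+1 < |\rho|$ and $\rho, i+1 \models \varphi$ from the semantics of $\Next\varphi$. The dual argument, using $\rho, i+1 \models \Box\false$ iff $i+1 \ge |\rho|$, handles $\Wnext\varphi$.

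For $\varphi_1 \lUntil \varphi_2$, I would first unfold the progression using the rule for $\Next$ to obtain
\[
\fp(\varphi_1 \lUntil \varphi_2, \sigma) \;=\; \fp(\varphi_2, \sigma) \,\vee\, \bigl(\fp(\varphi_1, \sigma) \wedge (\varphi_1 \lUntil \varphi_2) \wedge \Diamond\true\bigr).
\]
The argument then mirrors the standard \LTLf fixpoint unfolding of until. The semantic condition $\rho, i \models \varphi_1 \lUntil \varphi_2$ picks a witness index $j$; if $j = i$, the first disjunct is satisfied via IH on $\varphi_2$; if $j > i$, then $\rho, i \models \varphi_1$ (giving $\rho, i+1 \models \fp(\varphi_1, \rho[i])$ by IH), the shifted witness $j$ shows $\rho, i+1 \models \varphi_1 \lUntil \varphi_2$ directly from the semantics, and $i+1 \le j < |\rho|$ forces $\rho, i+1 \models \Diamond\true$. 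The converse direction unwinds the same disjunction back into the witness. The release case $\varphi_1 \Release \varphi_2$ is entirely symmetric, with $\Box\false$ playing the role of the empty-suffix witness.

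The main obstacle, and the reason for the asymmetric treatment of the various cases, is the bookkeeping at the trace boundary $i+1 = |\rho|$: the conventions $\epsilon \models \Box\false$, $\epsilon \not\models \Diamond\true$ must be threaded through every case, and in the until/release inductive steps one must resist applying the induction hypothesis to the compound formula itself (which is not structurally smaller) and instead unfold $\fp(\Next(\cdot), \sigma)$ and $\fp(\Wnext(\cdot), \sigma)$ by their defining equations, invoking IH only on the immediate subformulas $\varphi_1$ and $\varphi_2$.
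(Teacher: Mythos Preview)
The paper does not supply a proof of this proposition; it is simply quoted from \cite{GiacomoFLVX022} and used as a black box. Your structural-induction argument is the standard one and is correct: the base cases and Boolean connectives are immediate, the $\Next$/$\Wnext$ cases hinge exactly on the equivalences $\rho,i{+}1\models\Diamond\true \iff i{+}1<|\rho|$ and $\rho,i{+}1\models\Box\false \iff i{+}1\ge|\rho|$ that the paper builds into its empty-trace conventions, and in the $\lUntil$/$\Release$ cases you rightly expand $\fp(\Next(\cdot),\sigma)$ and $\fp(\Wnext(\cdot),\sigma)$ so that the induction hypothesis is only invoked on the strict subformulas $\varphi_1,\varphi_2$, with the recurrence of $\varphi_1\lUntil\varphi_2$ at position $i{+}1$ handled directly from the semantics rather than via IH. There is nothing to compare against in the present paper.
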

\begin{proposition}[\citealt{GiacomoFLVX022}]\label{prop:propequiv}
Let $\varphi$ and $\psi$ be two \LTLf formulas over $\P$ in \NNF s.t. $\varphi\!\sim_p\!\psi$, and $\sigma\!\in\!2^\P$. Then $\fp(\varphi, \sigma)\!\!\sim_p\!\!\fp(\psi, \sigma)$ holds.
\end{proposition}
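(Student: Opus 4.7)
The plan is to reduce the statement to a standard fact about propositional substitution. The key observation is a simple substitution lemma: for any \LTLf formula $\varphi$ in \NNF, the progression $\fp(\varphi, \sigma)$ is syntactically equal to the formula obtained from the propositionalization $\varphi^p$ by simultaneously substituting, for each atom $\alpha \in \PA(\varphi)$, the variable $a_\alpha$ with the formula $\fp(\alpha, \sigma)$.

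I would first prove this substitution lemma by structural induction on $\varphi$. The only cases that need real work are the Boolean connectives, since $\PA$ does not descend through temporal operators. When $\varphi$ itself is an element of $\PA(\varphi)$ (a literal or a temporal subformula), one has $\varphi^p = a_\varphi$ and the substitution directly yields $\fp(\varphi, \sigma)$. For $\varphi = \varphi_1 \wedge \varphi_2$ and $\varphi = \varphi_1 \vee \varphi_2$, the defining equations $\fp(\varphi_1 \wedge \varphi_2, \sigma) = \fp(\varphi_1, \sigma) \wedge \fp(\varphi_2, \sigma)$ and its dual, together with $\PA(\varphi) = \PA(\varphi_1) \cup \PA(\varphi_2)$, make the inductive step an immediate unfolding.

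Given the substitution lemma, the proposition follows from the standard fact that propositional equivalence is preserved under uniform substitution. Let $S = \PA(\varphi) \cup \PA(\psi)$ and $A = \PA(\fp(\varphi, \sigma)) \cup \PA(\fp(\psi, \sigma))$. Observe that any $\alpha \in S$ satisfies $\PA(\fp(\alpha, \sigma)) \subseteq A$, because $\fp(\alpha, \sigma)$ appears as a subformula inside $\fp(\varphi, \sigma)$ or $\fp(\psi, \sigma)$. Thus every $C \in 2^A$ induces an assignment $C' \in 2^S$ by declaring $a_\alpha \in C'$ iff $C \models \fp(\alpha, \sigma)^p$. By the substitution lemma, $C \models \fp(\varphi, \sigma)^p$ iff $C' \models \varphi^p$, and the same equivalence holds for $\psi$. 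The hypothesis $\varphi \sim_p \psi$ gives $C' \models \varphi^p$ iff $C' \models \psi^p$, so $C \models \fp(\varphi, \sigma)^p$ iff $C \models \fp(\psi, \sigma)^p$, which is exactly $\fp(\varphi, \sigma) \sim_p \fp(\psi, \sigma)$.

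The main difficulty is conceptual rather than technical: the substitution lemma feels almost tautological once stated, yet one has to carefully check that $\fp$ acts on each temporal atom in a way that depends only on the atom itself, and that the clauses of $\fp$ for $\wedge$ and $\vee$ mirror the clauses defining propositionalization exactly. Once this commutation is spelled out, no case analysis on the temporal operators $\lUntil$ or $\Release$ is needed for the main argument, since those all fall under the base case of $\PA$.
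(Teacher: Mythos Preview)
The paper does not supply its own proof of this proposition; it is quoted as a result from De~Giacomo \emph{et al.}~2022 and left unproved here. Your argument is correct and is essentially the natural one: the substitution lemma isolates precisely why $\fp$ respects propositional structure (it distributes over $\wedge$ and $\vee$ and acts on each element of $\PA(\varphi)$ independently of the surrounding Boolean context), and closure of propositional equivalence under uniform substitution then finishes the job. One small remark: your justification that $\PA(\fp(\alpha,\sigma))\subseteq A$ via ``$\fp(\alpha,\sigma)$ appears as a subformula'' is slightly loose, since $\PA$ does not in general descend through arbitrary subformulas; the cleaner statement is that, by the substitution lemma, $\fp(\varphi,\sigma)$ is a positive Boolean combination of the $\fp(\alpha,\sigma)$ for $\alpha\in\PA(\varphi)$, whence $\PA(\fp(\varphi,\sigma))=\bigcup_{\alpha\in\PA(\varphi)}\PA(\fp(\alpha,\sigma))$, which immediately gives the containment you need.
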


We generalize \LTLf formula progression from single instants to finite traces by defining $\fp(\varphi,\epsilon)=\varphi$, and 
$\fp(\varphi, \sigma u) =
\fp(\fp(\varphi, \sigma), u)$, where $\sigma \in 2^\P$ and $u \in (2^\P)^*$.
\begin{proposition}[\citealt{GiacomoFLVX022}]\label{prop:acc}
    Let $\varphi$ be an \LTLf formula over $\P$ in \NNF, $\rho$ be a finite trace. We have that $\rho \models \varphi$ iff $\epsilon \models \fp(\varphi, \rho)$.
\end{proposition}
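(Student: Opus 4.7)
The plan is to prove the proposition by induction on the length of the trace $\rho$, leveraging Proposition~\ref{prop:fprog} as the one-step lemma that drives the inductive step, and exploiting the definition $\fp(\varphi, \sigma u) = \fp(\fp(\varphi,\sigma),u)$ to unfold the progression one symbol at a time.

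For the base case $|\rho| = 0$, i.e.\ $\rho = \epsilon$, the definition gives $\fp(\varphi,\epsilon) = \varphi$, so the equivalence $\rho \models \varphi$ iff $\epsilon \models \fp(\varphi,\rho)$ collapses to a tautology. For the inductive step, assume the statement holds for all traces of length $n$ and let $\rho$ have length $n+1$. Then I would write $\rho = \sigma\cdot u$ with $\sigma = \rho[0] \in 2^\P$ and $u \in (2^\P)^*$ of length $n$. Applying Proposition~\ref{prop:fprog} at $i=0$ to the nonempty trace $\rho$ yields $\rho,0 \models \varphi$ iff $\rho,1 \models \fp(\varphi,\sigma)$.

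Next I would observe that, by the inductive definition of $\rho,i \models \psi$, the truth value at position $i\geq 1$ depends only on the suffix starting at position $1$; more precisely, one shows by a short structural induction on $\psi$ that $\rho,1 \models \psi$ iff $u \models \psi$ (where the degenerate case $u = \epsilon$ is handled by the paper's convention that empty-trace satisfaction is $\epsilon \models \psi$ exactly for $\ttrue$, $\Release$- and $\Wnext$-formulas, which matches the $i=|\rho|$ boundary behaviour implicit in Proposition~\ref{prop:fprog}). Combining this with the equivalence established above gives $\rho \models \varphi$ iff $u \models \fp(\varphi,\sigma)$. Applying the induction hypothesis to the length-$n$ trace $u$ and the formula $\fp(\varphi,\sigma)$ then yields $u \models \fp(\varphi,\sigma)$ iff $\epsilon \models \fp(\fp(\varphi,\sigma),u)$, and the latter equals $\epsilon \models \fp(\varphi,\sigma u) = \epsilon \models \fp(\varphi,\rho)$ by the generalised definition of progression, closing the induction.

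The main obstacle I anticipate is the bookkeeping around the suffix relation $\rho,1 \models \psi$ iff $u \models \psi$ together with the boundary case $u=\epsilon$: one must check that the semantics over positions within $\rho$ and the standalone semantics over the suffix $u$ agree on every connective, including the temporal ones whose satisfaction condition is parameterised by $|\rho|$ versus $|u|$. Everything else is a direct application of Propositions~\ref{prop:fprog} and the recursive definition of $\fp$ on concatenated traces; Proposition~\ref{prop:propequiv} is not needed for this particular statement, though it guarantees that the representative of $\fp(\varphi,\sigma)$ chosen along the way is immaterial up to propositional equivalence.
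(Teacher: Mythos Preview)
The paper does not supply its own proof of Proposition~\ref{prop:acc}: the result is simply quoted from \cite{GiacomoFLVX022} and used as a black box. Consequently there is nothing in the paper to compare your argument against.

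That said, your proposal is correct and is the natural proof of this fact. Induction on $|\rho|$ with Proposition~\ref{prop:fprog} supplying the one-step unfolding is exactly the intended route; the recursive definition $\fp(\varphi,\sigma u)=\fp(\fp(\varphi,\sigma),u)$ is precisely what lets the induction hypothesis apply. Your identification of the only delicate point---that $\rho,1\models\psi$ must coincide with $u\models\psi$ for the suffix $u$, including the boundary case $u=\epsilon$ where one relies on the paper's empty-trace conventions---is accurate, and the short structural induction you sketch handles it. The remark that Proposition~\ref{prop:propequiv} is not needed here is also correct.
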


We take the definition of the \emph{remove-next} function $\stripnext$ from \cite{GiacomoFLVX022}, defined over propositionalized \LTLf formulas in \XNF, $\varphi^p$:

\noindent- $\stripnext(\Diamond true) = \ttrue$, $\stripnext(\Box false) = \ffalse$

\noindent- $\stripnext(\varphi_1 \wedge \varphi_2) = \stripnext(\varphi_1) \wedge \stripnext(\varphi_2)$

\noindent- $\stripnext(\varphi_1 \vee \varphi_2) = \stripnext(\varphi_1) \vee \stripnext(\varphi_2)$

\noindent- $\stripnext(\Next\varphi)=\varphi\wedge\Diamond\mathit{true}$, $\stripnext(\Wnext\varphi)=\varphi\vee\Box\mathit{false}$

Note that $\stripnext$ applies to neither $\lUntil$-,$\Release$- formulas, since they do not appear in \XNF, nor literals~($p$, $\lnot p$), as the input of the function is a propositionalized \LTLf formula in \XNF form.
We have the following proposition:

\begin{proposition}[\citealt{GiacomoFLVX022}]\label{prop:xnf}
Given an \LTLf formula $\varphi$ in \NNF, $\forall \sigma\in 2^\P, \fp(\varphi, \sigma) \equiv \stripnext(\xnf(\varphi)^p|_\sigma)$,
where $\xnf(\varphi)^p|^\P_\sigma$ stands for substituting in $\xnf(\varphi)^p$ the variable $p$ with $\top$ if $p\in\sigma$ and $\bot$ if $p\in\P\setminus\sigma$.
\end{proposition}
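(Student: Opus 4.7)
My plan is to prove the equivalence by structural induction on the \LTLf formula $\varphi$ in \NNF, unrolling the four operations $\xnf$, $\cdot^p$, $\cdot|_\sigma$, and $\stripnext$ in sequence and matching the result against the recursive clauses of $\fp$. The induction hypothesis is that the stated identity holds for every proper subformula of $\varphi$.

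I would dispatch the base cases first: $\ttrue$, $\ffalse$, $p$, $\neg p$, $\Diamond\true$, $\Box\false$, $\Next\psi$, and $\Wnext\psi$. In each of these, $\xnf(\varphi)=\varphi$, so after propositionalization and the substitution $\cdot|_\sigma$ the computation reduces to one of a handful of cases that match the corresponding clause of $\fp$ verbatim: atoms collapse to $\top$ or $\bot$ according to membership in $\sigma$; $\stripnext$ reduces $\Diamond\true$ to $\ttrue$ and $\Box\false$ to $\ffalse$; and, crucially, $\stripnext$ maps $\Next\psi$ to $\psi\wedge\Diamond\true$ and $\Wnext\psi$ to $\psi\vee\Box\false$, which are exactly the $\fp$-clauses for $\Next$ and $\Wnext$. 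The Boolean cases $\varphi_1\wedge\varphi_2$ and $\varphi_1\vee\varphi_2$ then follow immediately, since each of $\xnf$, $\cdot^p$, $\cdot|_\sigma$, and $\stripnext$ distributes over the Boolean connectives, matching the corresponding distributive clauses of $\fp$ and allowing the inductive hypothesis to be applied to $\varphi_1$ and $\varphi_2$.

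The interesting cases are $\varphi_1\lUntil\varphi_2$ and $\varphi_1\Release\varphi_2$. For the Until case I would unfold the \XNF definition to $(\xnf(\varphi_2)\wedge\Diamond\true)\vee(\xnf(\varphi_1)\wedge\Next(\varphi_1\lUntil\varphi_2))$; propositionalize and substitute via $\cdot|_\sigma$; and push $\stripnext$ through the Boolean structure using the already-handled cases for $\wedge$, $\vee$, $\Diamond\true$, and $\Next$. This yields $(\stripnext(\xnf(\varphi_2)^p|_\sigma)\wedge\ttrue)\vee(\stripnext(\xnf(\varphi_1)^p|_\sigma)\wedge((\varphi_1\lUntil\varphi_2)\wedge\Diamond\true))$. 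On the other side, the $\fp$ clause for $\lUntil$ combined with the $\Next$ base case of $\fp$ gives $\fp(\varphi_2,\sigma)\vee(\fp(\varphi_1,\sigma)\wedge((\varphi_1\lUntil\varphi_2)\wedge\Diamond\true))$; the inductive hypothesis applied to $\varphi_1$ and $\varphi_2$ identifies the two expressions up to the trivial simplification $X\wedge\ttrue\equiv X$. The Release case is symmetric, with $\vee$, $\Box\false$, $\Wnext$ replacing $\wedge$, $\Diamond\true$, $\Next$.

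The main subtlety I expect is bookkeeping around the interaction between propositionalization and the substitution $\cdot|_\sigma$: the latter only replaces original propositional atoms from $\P$, while the variables $a_\psi$ standing for temporal subformulas are left untouched until $\stripnext$ consumes them. Making sure the inductive hypothesis is applied in the right syntactic context, particularly in the $\lUntil$/$\Release$ cases where the $\Next$/$\Wnext$ re-introduces the whole original formula (not its \XNF form) inside a temporal wrapper, is the one place where it is easy to confuse levels. A clean way around this is to factor out, as an auxiliary lemma, the observation that for any $\psi\in\PA(\xnf(\varphi))$ the corresponding $a_\psi$ is inert under $\cdot|_\sigma$ and is rewritten by $\stripnext$ exactly as $\fp$ would rewrite it on $\sigma$; with that in hand, the induction goes through smoothly.
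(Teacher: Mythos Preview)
The paper does not give its own proof of this proposition: it is stated with attribution to De~Giacomo \emph{et~al.}~2022 and used as a black box. Your structural-induction argument is the natural and correct way to establish the identity, and the case analysis you outline (literals and $\ttrue/\ffalse$ collapsing under $\cdot|_\sigma$, $\Next/\Wnext$ handled directly by $\stripnext$, Boolean connectives by distributivity, and $\lUntil/\Release$ by unfolding $\xnf$ and invoking the hypothesis on the immediate subformulas) goes through as you describe; your remark about the propositional placeholders $a_\psi$ being inert under $\cdot|_\sigma$ and only consumed by $\stripnext$ is exactly the bookkeeping needed to make the $\lUntil/\Release$ cases clean.
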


\noindent
\textbf{AND-OR Graph Search.}                                                
Being a popular topic in AI, AND-OR graph search has attracted extensive studies. Following~\cite{Nilsson71,Nilsson82}, an AND/OR graph can be considered as a generalization of a directed graph, where there are a set of nodes $\V$ and generalized connectors~(edges) between nodes. Every connector links one single node $v \in \V$ to a set of nodes $V \subseteq \V$.
% where $n$ is the number of nodes in the graph. 
A connector is called an AND~(resp. OR) connector, if there is a logical AND~(resp. OR) relationship among $V$.
% In general, a node can have more than one AND connector, each of which can be an alternative of the other~(i.e., a logical OR applied on the AND connectors). 
It should be noted that in this work we only focus on specific AND-OR graphs, where every node has only one connector leading to its successor nodes. Therefore, we have AND-nodes with an AND connector, and OR-nodes with an OR connector. Moreover, the set of goal nodes $V_g$ only consists of OR-nodes.

The AND-OR graph search problem was first introduced in~\cite{Nilsson71}. Intuitively speaking, the searching procedure aims to find a winning plan that encodes a path leading from the initial node to goal nodes. It is possible to involve both kinds of nodes in the winning plan, therefore, the plan lists one outgoing option at OR-nodes, and all outgoing options at AND-nodes leading to branches. Hence, a winning plan is essentially a tree such that all leaves are goal nodes. There has been extensive studies on AND-OR graph search techniques~\cite{MahantiB85,Chakrabarti94,JimenezT00}, and have been utilized in a lot of applications, e.g., FOND planning~\cite{MyND,Mattmuller13,GeBo13}.
% \myv{We need to say how the tree handles AND and ORs.}                         
% \myv{Do we have an algorithm for searching AND-OR graphs?}

\noindent
\textbf{Knowledge Compilation: BDDs and SDDs.}
Knowledge Compilation~\cite{DarwicheM02}
is a family of approaches for dealing with the
computational intractability of general propositional reasoning.
A propositional theory is compiled off-line into a target language, which is then used on-line to answer a large number of queries in polynomial time. The key motivation behind knowledge compilation is to push as much of the computational overhead into
the off-line phase, which is amortized over all on-line queries.
There are a plethora of knowledge compilation techniques.
Perhaps the first knowledge compilation technique are (Ordered) Binary Decision Diagrams (BDDs) \cite{Bryant92}, where
in order to represent a Boolean function, the classical method is applying Shannon decomposition.
Intuitively, BDD decomposes Boolean functions with one variable at a time. Therefore, the canonicity of BDD is determined wrt a specific ordering of variables. Crucially, propositional equivalence between two propositional formulas can be checked in constant time once both formulas are converted into BDDs.
The more recent Sentential Decision Diagrams (SDDs)~\cite{Darwiche11} 
utilize a more general decomposition technique that decomposes Boolean functions with a set of variables at each round.
Let $f(\Y \cup \X)$ be a Boolean function over variables $\Y \cup \X$, where $\Y,\X$ are disjoint. Given an $(\Y,\X)$-partition, 
% where $\Y$ variables are considered \emph{primary} and $\X$ variables are considered \emph{subsequent},
the SDD of $f$, with respect to the $(\Y,\X)$-partition, can be written as $\bigvee^n_{i=1}[\sddprime_i(\Y)\wedge \sddsub_i(\X)]$. 
% Intuitively, SDD decomposes $f$ into $n$ children, each of which consists of Boolean functions $\sddprime_i(\Y)$~(what are satisfied \emph{in primary}) and $\sddsub_i(\X)$~(what should be satisfied \emph{in subsequent}, according to $\sddprime_i(\Y)$).
%%%
In particular, besides that all the primes are disjoint and covering, i.e., $\sddprime_i \wedge \sddprime_j=\false$ for $i\neq j$, and $\bigvee^n_{i=1}\sddprime_i = true$, SDD also guarantees that all the subs are compressed, i.e., $\sddsub_i(\X) \neq \sddsub_j(\X)$ for $i \neq j$. Hence, the canonicity of SDDs is determined wrt a specific partition of variables. 

\noindent
\textbf{LTL$_f$ Synthesis.}  
Let $\varphi$ be an \LTLf formula over
$\P=\X \cup \Y$, and $\X, \Y$ are two disjoint sets of propositional variables controlled by the \emph{environment} and the \emph{agent}, respectively. 
% 
% The problem of \LTLf synthesis 
% is described as a tuple $(\varphi, \X, \Y)$, where $\varphi$ is an \LTLf formula over $\P=\X \cup \Y$, and $\X, \Y$ are two disjoint sets of propositional variables controlled by the \emph{environment} and the \emph{agent}, respectively. 
% 
% Sometimes we do not mention $\X$ and $\Y$ explicitly, if they are clear from the context.
% \begin{definition}
%     The \emph{synthesis} problem $(\varphi, \X, \Y)$ consists in computing a strategy $g: (2^{\X})^* \rightarrow 2^{\Y}$, such that for an arbitrary infinite sequence $\lambda = X_0,X_1,\ldots\in (2^{\X})^{\omega}$, we can find $k\geq 0$ such that $\rho^k \models \varphi$, where $\rho^k = (X_0\cup g(\epsilon)), (X_1\cup g(X_0)),\ldots , (X_k\cup g(X_0,X_1,\ldots,X_{k-1}))$. If such a strategy does not exist, then $\varphi$ is unrealizable.
% \end{definition}
The \emph{synthesis} problem $(\varphi, \X, \Y)$ consists in computing a strategy $g: (2^{\X})^* \rightarrow 2^{\Y}$, such that for an arbitrary infinite sequence $\lambda = X_0,X_1,\ldots\in (2^{\X})^{\omega}$, we can find $k\geq 0$ such that $\rho^k \models \varphi$, where $\rho^k = (X_0\cup g(\epsilon)), (X_1\cup g(X_0)),\ldots , (X_k\cup g(X_0,X_1,\ldots,X_{k-1}))$. If such a strategy does not exist, then $\varphi$ is unrealizable.
% 
% \noindent
\LTLf synthesis can be solved by reducing to an adversarial reachability game on the corresponding Deterministic Finite Automaton~(\DFA)~\cite{DegVa15}.  Hence, a strategy can also be represented as a finite-state controller $g: \S \mapsto 2^{\Y}$, where $\S$ denotes the state space of the \DFA.

\let\oldalglinenumber\alglinenumber
\renewcommand{\alglinenumber}[1]{\scriptsize #1:}
\begin{figure}[t]
\vspace{-0.3cm}
\begin{algorithm}[H]
\scriptsize
\caption{\scriptsize SDD-based Forward \LTLf Synthesis [De Giacomo \emph{et al.}, 2022]}
\label[algorithm]{alg:sdd-synthesis}
	\begin{algorithmic}[1]
	
	\Function{Synthesis}{$\varphi$} \Return $\mathsf{strategy}$
	    \If{\textsc{IsAccepting}($\varphi$)}
	        \State\textsc{AddToStrategy}$(\varphi, \mathsf{true})$
	        \State\Return \textsc{GetStrategy}()
	    \EndIf
	    \State \textsc{InitialGraph}($\varphi$) 
	    \State $n\coloneqq$ \textsc{GetGraphRoot}() 
	    \State $\mathsf{found} \coloneqq$ \textsc{Search}($n$, $\emptyset$)
	    \If{$\mathsf{found}$} \Return \textsc{GetStrategy}()
	    \EndIf
	    \State\Return \textsc{EmptyStrategy}() {\small{$\triangleright$ \textit{$\varphi$ is unrealizable}}}
	\EndFunction
	% \Statex
	\Function{Search}{$n, \pt$} \Return $\mathsf{True}/\mathsf{False}$
	    \If{\textsc{IsSuccessNode}($n$)} \Return $\mathsf{True}$
	    \EndIf
	    \If{\textsc{IsFailureNode}($n$)} \Return $\mathsf{False}$
	    \EndIf
	    \If{\textsc{InPath}($n, \pt$)}\label{line:in-path} {\small{$\triangleright$ \textit{We found a loop}}}
	    \State \textsc{TagLoop}($n$) \Return $\mathsf{False}$
	    \EndIf
	    \State $\psi \coloneqq$\textsc{FormulaOfNode}($n$)
	    \If{\textsc{IsAccepting}($\psi$)}
	        \State \textsc{TagSuccessNode}($n$)
	        \State \textsc{AddToStrategy}$(\psi, \mathsf{true})$ 
	        \State\Return $\mathsf{True}$
	    \EndIf
	    \State \expand($n$) {\small{$\triangleright$ \textit{Uses SDD to partition $\psi$ wrt $\Y$ and $\X$}}}\label{line:expand}
	    \For{$(act,$ {\small{\andnode}}$)\in $\textsc{GetOrArcs}($n$)}\label{line:get-or-arcs}
	        \For{$(resp, succ)\in $\textsc{GetAndArcs}({\small{\andnode}})}\label{line:get-and-arcs}
	        \State $\mathsf{found} \coloneqq$\textsc{Search}($succ, [\pt|n]$)
	        \If{$\neg \mathsf{found}$} \textbf{Break}\label{line:short-circuit-negative}
	        \EndIf
	    \EndFor
	        \If{$\mathsf{found}$} \label{line:short-circuit-positive}
	        \State \textsc{TagSuccessNode}($n$)
	        \State \textsc{AddToStrategy}$(\psi, act)$
	        \If{{\textsc{IsTagLoop}($n$)}} 
	        \State {\textsc{BackProp}($n$)}\label{line:backprop}
	       % \State {\textsc{RmLoopTag}($n$)}
	        \EndIf
	        \State\Return $\mathsf{True}$
	        \EndIf
	    \EndFor
	    \State \textsc{TagFailureNode}($n$)
	    \State\Return $\mathsf{False}$
	\EndFunction
	\end{algorithmic}
\end{algorithm}
\vspace{-0.8cm}
\end{figure}
\renewcommand{\alglinenumber}[1]{\oldalglinenumber{#1}}

\section{Forward \LTLf Synthesis}
\label{sec:preliminaries:ltlf-syn-as-and-or}
Two recent papers \cite{XiaoLZSPV21,GiacomoFLVX022} proposed a \emph{forward} approach to solve the problem of \LTLf synthesis. 
Their implementations are called, respectively, \ltlfsyn and \cynthia.
The idea is to build the \DFA on-the-fly, 
while performing an adversarial forward search towards the final states, by considering the \DFA as a sort of AND-OR graph \cite{Nilsson71}.
Therefore, a winning strategy might be found before constructing the whole \DFA.
Since the implementation of the former approach (\ltlfsyn) has been considered superseded by the latter (\cynthia) in terms of performance, here we focus on \cynthia, although our arguments can be considered more general and not just applicable to specific techniques.

% \todo[inline]{talk about \ltlfsyn ?}
The state-of-the-art forward technique~\cite{GiacomoFLVX022}, implemented in the tool \cynthia, is described by the pseudocode in~\Cref{alg:sdd-synthesis}.
The algorithm is basically a top-down, depth-first traversal of the
AND-OR graph induced by the on-the-fly \DFA construction, proceeding forward from the initial state, and excluding strategies that lead to loops.
The forward-based generation of the AND-OR graph is based on formula progression and on an abstract \expand function~(\Cref{line:expand}) that, taken in input a search node $n$, it produces the next available actions and successor states.
The presence of loops must be carefully handled; when a loop is detected at node $n$, the procedure returns false, temporarily considering $n$ as a failure node. Note that node $n$ is not tagged as failure, since it is unknown whether all the or-arcs of $n$ are explored. If later during the search $n$ is discovered as a success node, such information must be propagated from $n$ backwards to the ancestor nodes of $n$.
It should be noted that, in
a forward search on an AND-OR graph, it is critical to handle loops with the assistance of this backward propagation,
implemented in \textsc{BackProp}~(\Cref{line:backprop}), as illustrated in~\cite{NoteDG}. 
% For more details on the search algorithm, please refer to the original paper~\cite{GiacomoFLVX022}.
% 
The realization of the abstract \expand function was based on SDDs. They have been used for two subtasks: \myi \textbf{state-equivalence checking}, i.e. checking whether two states are equivalent, and \myii \textbf{search node expansion}, i.e. identifying the next AND-OR arcs. The experimental evaluation of 
% \citeauthor{GiacomoFLVX022}'s 
Their technique is rather impressive, as its implementation \cynthia, outperformed other state-of-the-art tools on challenging benchmarks, e.g. on the Nim benchmark~\cite{BoutonNimAG}. 
For more details on the search algorithm, please refer to the original paper~\cite{GiacomoFLVX022}.

\subsection[Limitations of Previous Works]{Limitations of Previous Works}
\label{sec:limitations}
% The motivations for this work lie on the limitations of previous forward \LTLf synthesis approaches namely \citeauthor{XiaoLZSPV21}'s and \citeauthor{GiacomoFLVX022} works.
However, as already acknowledged by the authors, the tool performed poorly on the variant of the \emph{Double Counters} benchmark (cfr. Section 5 of \citealt{GiacomoFLVX022}). We discovered that the main reason is that the search gets stuck with the search node expansion to compute the next agent's and environment's moves, whose number grows exponentially with the scaling parameter $n$, the number of bits of the counters.
In general, we identify at least three factors that hinder the scalability of \cynthia:\\
% \begin{enumerate}[i]
%     \item \textbf{Disjoint \& covering moves.}
% 
\noindent
\textbf{\myi Disjoint \& covering moves.}
    \ltlfsyn, which naively enumerates all the exponentially-many agent's and environment's moves, has been surpassed by \cynthia. \cynthia is able to branch on disjoint and covering propositional formulas, rather than individual evaluations of agent's and environment's variables, and therefore ending up, most of the times, in a more succinct representation of the next players' moves.
    Nevertheless, for problem instances where the branching factor is very high, the compilation by means of SDDs does not bring much more benefits than exhaustive enumeration, ending up in a huge computational overhead with little usefulness.\\
    \noindent\textbf{\myii No visit before \emph{all} moves are computed.} The search algorithm is constrained by how the identification of the next moves works. That is, the search procedure cannot visit children nodes before all OR arcs, and subsequent AND arcs, have been computed from the current OR-node being expanded. Obviously, a breadth-first search procedure (e.g. AO$^*$~\cite{Nilsson82}) will need to consider all the children of the current search node before proceeding. The point is that, if a search procedure does not need to know in advance all the children of the current node, like \Cref{alg:sdd-synthesis}, then it must be able to do so.\\
    \noindent\textbf{\myiii Equivalence checking and node expansion tighten together.} It is not necessary to tighten together the two tasks of state-equivalence checking and search node expansion. They can be implemented in different ways according to the desired computation trade-offs (e.g. space vs time).
% \end{enumerate}

\subsection{A new framework.}
Our aim is to propose a new framework that tries to overcome the above limitations that we consider crucial for a scalable approach.
To do so, we consider a slightly more general version of \Cref{alg:sdd-synthesis}. The generalization is not on the search algorithm being used, but rather on the building blocks that make any AND-OR search algorithm actually suitable for solving \LTLf synthesis in a forward fashion.
In particular, we make a step further from the framework introduced in \cite{GiacomoFLVX022}, which formalizes the search algorithm on top of the \expand function. 
Instead, the two primitive operations that we consider are:
% \begin{itemize}
%     \item $\EquivalenceCheck(n_1, n_2)$: checks whether the search nodes $n_1$ and $n_2$ can be considered equivalent wrt the current AND-OR search problem.
%     \item $\GetArcs(n)$: returns an \emph{iterator} of AND-arcs (OR-arcs resp.) of the AND-node (OR-node, resp.) $n$.
% \end{itemize}
$\EquivalenceCheck(n_1, n_2)$, that checks whether the search nodes $n_1$ and $n_2$ can be considered equivalent wrt the current AND-OR search problem; and $\GetArcs(n)$, that returns an \emph{iterator} of AND-arcs (OR-arcs resp.) of the AND-node (OR-node, resp.) $n$.
In \Cref{alg:sdd-synthesis}, the \EquivalenceCheck procedure is (implicitly) used to check whether a node has been already visited (e.g. see the \InPath function of \Cref{line:in-path}) or to retrieve search tags (e.g. see \textsc{IsSuccessNode}, \textsc{IsFailureNode} and \textsc{IsTagLoop}).
The \GetArcs procedure would be used in place of \textsc{GetOrArcs} and \textsc{GetAndArcs} in Alg. \ref{alg:sdd-synthesis}, at lines \ref{line:get-or-arcs} and \ref{line:get-and-arcs}, respectively.
For the rest of the paper, we consider such \textbf{modified \Cref{alg:sdd-synthesis} as the basis of our techniques.}

The crucial observation is that $\GetArcs(n)$ does not require that the arcs of search node $n$ have already been computed or, in other words, that the node $n$ has been fully expanded (as done by \expand function).
As per specification, $\GetArcs(n)$ is an iterator over the available moves from $n$. 
The concept of iterator is well-known in the computer science community as a way to decouple algorithms from containers~\cite{gamma1995design}.
More interestingly, a special case of iterators, \emph{generators}~\cite{murer1996iteration}, would allow to compute the next players' moves iteratively ``on-demand'', therefore allowing a depth-first search algorithm to visit the next arc returned by the generator even if all arcs have not been computed yet. We will use a generator-based implementation of \GetArcs in \Cref{sec:dpll-based-forward-synthesis}.

In fact, \citeauthor{GiacomoFLVX022}'s approach can be seen as a special case of the proposed framework, in which both \EquivalenceCheck and \GetArcs are implemented using SDDs: two search nodes are equivalent if they point to the same SDD node, and \GetArcs is an iterator that simply scans the children of the root SDD node of $n$. However, this framework can easily overcome the limiting factors identified earlier in this section, namely: \myi computed moves do not have to be disjoint and covering (i.e. different moves that lead to the same successor are allowed, although preferably avoided); \myii if \GetArcs is implemented using a generator-like approach, the visit of a child node can happen far before the computation of all the available moves; and \myiii the two main search subtasks, state-equivalence checking and a search node expansion, are implemented by two potentially decoupled functions (\EquivalenceCheck and \GetArcs, respectively).

\section{Equivalence Checks}%: BDD-based and Hash-Consing-based}
In this section, we describe two equivalence checks that can be used for forward \LTLf synthesis. The first one is a knoweldge-compilation-based equivalence check, that uses BDDs to compile the state formula and achieve constant (propositional) equivalence checking. The second one is a simple and lightweight equivalence check that has never been used in this context and, as we shall see, turns out to be very useful in the experimental evaluation. However, we discuss implications regarding the completeness of the resulting synthesis procedure.

\noindent
\textbf{BDD-based \EquivalenceCheck.} The BDD-based equivalence check is similar to the SDD-based equivalence check performed by \cynthia.
That is, for a search node $n$, we take its associated \LTLf formula $\psi$ with \FormulaOfNode (remember that search node is associated with an \LTLf formula). Then, we compute $\xnf(\psi)$, which is propositionally equivalent to $\psi$. 
$\xnf(\psi)$, by construction, is defined over the set of variables $\Y\cup\X\cup\Z$, where $\Z = \bigcup_{\theta \in \cl(\varphi)} \{z_\alpha | \alpha \in \pa(\xnf(\theta)), \text{$\alpha$ not literal}\}$.
Finally, we get its BDD representation, i.e. $B_{\psi}
\!\!\coloneqq\!\!\BddRepresentation(\xnf(\psi)^p)$. We do these operations both for $n_1$ and $n_2$, yielding $B_{\xnf(\psi_1)}$ and $B_{\xnf(\psi_2)}$. The equivalence check whether the two BDDs point to the same BDD node ($B_{\xnf(\psi_1)}\!\!=\!\!B_{\xnf(\psi_2)}$). If that is the case then it means, thanks to the canonicity property of BDDs, that the associated (propositionalized) formulas are propositionally equivalent.

\begin{lemma}\label{lemma:bdd-eq-check-correctness}
Let $(\varphi, \X , \Y)$ be a \LTLf synthesis problem instance.
The \BddBasedEqCheck procedure for such instance induces a search space for \Cref{alg:sdd-synthesis} with no more than $2^{2^{|\O(\cl(\varphi))|}}$ search nodes.
\end{lemma}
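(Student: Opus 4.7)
The plan is to bound the number of equivalence classes induced by \BddBasedEqCheck, since the modified \Cref{alg:sdd-synthesis} uses \EquivalenceCheck to prevent revisiting previously-seen nodes, and the number of distinct search nodes is therefore at most the number of such classes. The core of the argument is that every BDD arising during the search lives in a fixed variable space of size $O(|\cl(\varphi)|)$, so the canonicity of BDDs caps the number of possible compiled objects by $2^{2^{O(|\cl(\varphi)|)}}$.

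First, I would pin down the variable set. By inspection of \BddBasedEqCheck, each node's formula $\psi$ is compiled as $\BddRepresentation(\xnf(\psi)^p)$, a BDD over $\Y\cup\X\cup\Z$, with $\Z$ the auxiliary variables introduced during $\xnf$-propositionalization. Using \Cref{prop:xnf} together with the inductive shape of $\fp$, every state formula reachable from the initial $\varphi$ by progression is propositionally equivalent to a positive Boolean combination of subformulas drawn from the enriched closure $\cl(\varphi) \cup \cl(\Diamond\true) \cup \cl(\Box\false)$. Hence the $\xnf$-propositionalization of any reachable $\psi$ ranges over a fixed variable set, and since $\xnf$ enlarges the formula only linearly (\Cref{thm:xnf}), one obtains $|\Y\cup\X\cup\Z| = O(|\cl(\varphi)|)$. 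I would then invoke BDD canonicity under a fixed variable ordering: there are exactly $2^{2^k}$ Boolean functions on $k$ variables, and hence at most that many distinct BDD roots. Setting $k=|\Y\cup\X\cup\Z|$ delivers the target bound of $2^{2^{O(|\cl(\varphi)|)}}$ equivalence classes, which immediately upper-bounds the search space.

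The step I expect to require the most care is the closure argument: showing by structural induction on the progression and $\xnf$ rules that no formula reachable by repeated application of $\fp$ ever introduces a temporal subformula outside the enriched closure of $\varphi$, and therefore no BDD variable outside $\Z$. Once this invariant is in place, the finite-function counting together with BDD canonicity closes the proof immediately; what remains is routine bookkeeping about the sizes of $\pa$, $\xnf$, and $\cl$.
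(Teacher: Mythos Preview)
Your proposal is correct and follows essentially the same approach as the paper's proof: bound the variable set $\Y\cup\X\cup\Z$ by $O(|\cl(\varphi)|)$, then invoke BDD canonicity to cap the number of distinct Boolean functions (hence BDD roots, hence equivalence classes) by $2^{2^{|\Y\cup\X\cup\Z|}}$. The paper's proof is terser and simply asserts $\xnf(\psi)^p\in\B^+(\Y\cup\X\cup\Z)$ without spelling out the closure-under-progression argument you single out as the delicate step, but the structure is identical.
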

\noindent\textit{Proof.} Any \LTLf formula $\psi$ associated to some search node $n$ of ~\Cref{alg:sdd-synthesis} is such that $\xnf(\psi)^p\in\B^+(\Y\cup\X\cup\Z)$. Since there are at most $2^{|\Y\cup\X\cup\Z|}$ models, thanks to the canonicity property of BDDs, there can be at most $2^{2^{|\Y\cup\X\cup\Z|}}$ propositionally equivalent formulas.
Since $\Y\cup\X\cup\Z = \O(\cl(\varphi))$, we get the claim. \qed
% \begin{proof}
% Any \LTLf formula $\psi$ associated to some search node $n$ of ~\Cref{alg:sdd-synthesis} is such that $\xnf(\psi)^p\in\B^+(\Y\cup\X\cup\Z)$. Since there are at most $2^{|\Y\cup\X\cup\Z|}$ models, thanks to the canonicity property of BDDs, there can be at most $2^{2^{|\Y\cup\X\cup\Z|}}$ propositionally equivalent formulas.
% Since $\Y\cup\X\cup\Z = \O(\cl(\varphi))$, we get the claim.
% \end{proof}

\noindent
We preferred the use of BDDs instead of SDDs
since we do not need the decomposing feature of SDDs, and also because robust and optimized implementations for BDDs already exists, e.g. CUDD \cite{cudd}, with useful features such as dynamic variable reordering.

\noindent
\textbf{Hash-Consing-based \EquivalenceCheck.}
We now consider an equivalence check procedure that is based on \emph{structural equivalence}:
two search nodes $n_1$ and $n_2$ are considered equivalent if their formulas $\psi_1$ and $\psi_2$ have the same syntax tree, i.e.:
$\HashConsingBasedEqCheck(n_1, n_2):=\FormulaOfNode(n_1)=\FormulaOfNode(n_2)$.
% 
% \begin{align}\tiny\label{alg:hash-consing-based-equivalence-check}
% &\HashConsingBasedEqCheck(n_1, n_2) := \\ 
% & \quad \FormulaOfNode(n_1) = \FormulaOfNode(n_2)\nonumber
% \end{align}
% Ta
To make the comparison fast, we can use \emph{hash consing}~\cite{deutsch1973interactive} which is a technique used to share values that are structurally equal. 
Using hash consing, two formulas can be stated as structurally equivalent if they point to the same memory address, achieving constant time equality check. 
% Notably, hash consing has been used in Lisp implementations~\cite{deutsch1973interactive}.
% 
% The function \HashConsingBasedEqCheck in \Cref{alg:hash-consing-based-equivalence-check} implements hash-consing based comparison of two search nodes.
% 
% \begin{figure}[t]
% \vspace{-0.3cm}
% \begin{algorithm}[H]
% \caption{Hash-Consing-based \EquivalenceCheck}
% \label[algorithm]{alg:hash-consing-based-equivalence-check}
%     \begin{algorithmic}[1]
%     \Function{\HashConsingBasedEqCheck}{$n_1, n_2$} \!\Return $\mathsf{bool}$
%         % \If{$\nodetypefun(n_1) \neq \nodetypefun(n_2)$} \Return \algfalse \EndIf
%         \State $\psi_1\coloneqq\FormulaOfNode(n_1)$
%         \State $\psi_2\coloneqq\FormulaOfNode(n_2)$
%         \State \Return $\psi_1 = \psi_2$ {\small{$\triangleright$ \textit{constant time thanks to hash consing}}}
% 	\EndFunction
% 	\end{algorithmic}
% \end{algorithm}
% \vspace{-0.8cm}
% \end{figure}
% 
Unfortunately, we have the following negative result:
\begin{theorem}
\label{th:hash-consing-non-termination}
 The modified \Cref{alg:sdd-synthesis} with \HashConsingBasedEqCheck for \EquivalenceCheck is sound but not complete for \LTLf synthesis.
\end{theorem}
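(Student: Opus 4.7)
The theorem has two halves: soundness and the failure of completeness. Soundness will follow from an equivalence-refinement argument, while the second half will require exhibiting a concrete instance on which the modified algorithm diverges.

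For soundness, I would note that structural equality of syntax trees trivially implies propositional equivalence, so \HashConsingBasedEqCheck induces an equivalence relation finer than the propositional one used by \BddBasedEqCheck. Every use of \EquivalenceCheck in the modified \Cref{alg:sdd-synthesis}---loop detection on the current path via \InPath, success/failure/loop tag retrieval, and \textsc{BackProp}---is correct provided the relation produces no \emph{false positives}, i.e. never identifies two nodes whose associated \LTLf formulas are not propositionally equivalent. Since hash-consing equivalence cannot produce such a false positive, and successor computation via $\fp$ is semantically correct (Proposition~\ref{prop:fprog}), the soundness of the AND-OR search framework carries over unchanged: whenever the algorithm halts with a strategy the strategy is genuinely winning, and whenever it halts with \textsc{EmptyStrategy} the instance is genuinely unrealizable.

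For the failure of completeness I would exhibit an instance on which the algorithm never halts. The key point to exploit is that $\fp$ produces syntax trees in a non-canonical form: every unfolding of an $\lUntil$- or $\Release$-subformula, as prescribed by the $\fp$ rules, introduces additional constants ($\ttrue$, $\ffalse$) and an outer Boolean skeleton that is never simplified. Concretely, for $\varphi = p \lUntil q$ and $\sigma = \{p\}$, one step of progression produces $\psi_1 = \ffalse \vee (\ttrue \wedge ((p \lUntil q) \wedge \Diamond\ttrue))$, and iterating gives a sequence $\psi_{k+1} = \ffalse \vee (\ttrue \wedge (\psi_k \wedge \ttrue))$ whose syntax trees have strictly growing size but are all propositionally equivalent to $p \lUntil q$. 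I would embed this gadget in a synthesis problem in which the environment can force $\sigma = \{p\}$ at every step while the agent has no move that immediately satisfies the goal (for instance by placing the $p \lUntil q$ conjunct behind a prefix of $\Next$-operators and adjoining a liveness requirement that the agent must still honour along that branch). Along this environment branch hash consing never signals a repeat, \InPath always returns \algfalse, no loop tag is ever set, and the DFS recursion descends forever.

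The main obstacle I anticipate is the second half: progression chains of the above shape exist trivially for any non-trivial \LTLf formula, but for the modified algorithm to genuinely diverge, the AND-OR DFS must actually commit to the divergent environment branch rather than being short-circuited by a success on a sibling OR-arc or by the \textsc{IsAccepting} check at an intermediate node. This requires choosing the variable partition $(\X,\Y)$ and the surrounding conjuncts so that every OR-arc explored before the divergent branch is either a failure leaf or a finite sub-search, and that the divergent branch itself is never preempted. Once such an embedding is fixed, divergence follows by a straightforward induction on the depth of the recursion, using the strict monotonicity of the syntax-tree size of $\psi_k$ to rule out any loop detection.
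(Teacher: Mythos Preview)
Your soundness argument is fine and matches the paper's: hash-consing equality refines propositional equivalence, so no false positives arise, and the correctness of the underlying AND-OR search carries over.

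The gap is in your counterexample. Your chain $\psi_{k+1}=\ffalse\vee(\ttrue\wedge(\psi_k\wedge\ttrue))$ grows only through absorbing/neutral constants, and the paper's successor computation explicitly applies propositional simplification (the \Replace step ``recursively applies the usual propositional simplification rules, e.g.\ considering the absorbing or neutral boolean values''; see also Example~\ref{example:dpll-running-example} where successors are shown ``after some simplification''). Under that simplification your chain collapses: from $(p\lUntil q)$ one step yields $(p\lUntil q)\wedge\Diamond\ttrue$, and the next step yields $((p\lUntil q)\wedge\Diamond\ttrue)\wedge\ttrue$, which simplifies back to $(p\lUntil q)\wedge\Diamond\ttrue$---a structural repeat that \InPath would catch. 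So the very mechanism you rely on for growth is erased by the algorithm you are analysing.

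The paper sidesteps this by choosing $\varphi=\Box a\,\lUntil\,\Diamond b$ with $\Y=\{a\}$, $\X=\{b\}$, where the \emph{nesting} of temporal operators is what drives growth: each $\xnf$ unfolding wraps the previous state formula in a skeleton containing the non-constant pieces $a$, $b$, $\Next\Diamond b$, and $\Wnext\Box a$, none of which are eliminated by propositional simplification. The recurrence has the shape $\xnf(\varphi_n)=A\vee(\xnf(\varphi_{n-1})\wedge B\wedge\Diamond\ttrue)$ with $A,B$ fixed and non-trivial, so size strictly increases regardless of constant-folding. Incidentally, the difficulty you anticipate---ensuring the DFS actually commits to the divergent branch---is not the real obstacle: since the growing branch is an \emph{environment} move, the AND-node semantics forces its exploration unless a sibling already failed, and in the paper's instance no sibling fails before the divergent one is entered. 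The real work is making the syntactic growth survive simplification, which requires the nested-temporal structure rather than a bare $p\lUntil q$.
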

\noindent\textit{Proof.}
Soundness follows from the soundness of hash-consing based equivalence check.
To disprove completeness, 
consider the synthesis problem with $\varphi=\Box a \lUntil \Diamond b$, $\Y=\{a\}$ and $\X=\{b\}$.
Let $\sigma = \{a\}$, $\varphi_0 = \varphi$ and $\varphi_n = \fp(\varphi_{n-1}, \sigma)$. 
It can be shown by induction on $n$ the following statement: for all $n\ge 1$, we have
$\xnf(\varphi_n)=(((b \wedge \Ftt) \vee \Next\Diamond b) \wedge \Ftt)\vee(\xnf(\varphi_{n-1})\wedge(((a\vee\Gff) \wedge \Wnext\Box a) \vee \Gff)\wedge \Ftt)$.
By correctness of $\fp$, the set of formulas $\varphi_0,\varphi_1.\dots$ are semantically equivalent but structurally different, ending up in a infinite loop, which is undetected by the \HashConsingBasedEqCheck.
\qed

\noindent At the core of the issue is that, by how the formula progression works, there are some cases in which a new structurally different formula can be always produced by some particular sequence of applications of formula progression rules, although propositionally equivalent formulas have been already produced earlier during the search.
Nevertheless, such equivalence check is very computationally cheap and, as we shall see in the experimental section, often it performs better than the BDD-based equivalence check.

To guarantee the termination of this version of the search algorithm, 
we propose the following procedure: given a synthesis problem, first execute the modified \Cref{alg:sdd-synthesis} with \HashConsingBasedEqCheck as equivalence check and some search node expansion procedure.
As soon as, during the execution, the size of the formula of any generated search node becomes greater than a given threshold $t$, then abort the execution and resort to the search algorithm described in \Cref{sec:dpll-based-forward-synthesis}, i.e. \Cref{alg:sdd-synthesis} based on \BddBasedEqCheck and \DpllGetArcs. In other words, if the problem does not present the pathological corner case shown in the proof of \Cref{th:hash-consing-non-termination}, then try to solve it, without getting stuck with onerous BDD-based compilations.
\begin{lemma}
\label{thm:hash-consing-with-threshold}
     The modified \Cref{alg:sdd-synthesis} with \HashConsingBasedEqCheck for \EquivalenceCheck with size formula threshold $t$ always terminates and is correct.
\end{lemma}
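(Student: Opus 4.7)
The plan is to prove termination and correctness of the two-phase procedure separately, exploiting \Cref{lemma:bdd-eq-check-correctness} as a black box for the fallback phase.

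For termination, I would consider the two possible outcomes of Phase 1 (the hash-consing-based search under the size threshold $t$). If some generated formula exceeds size $t$, Phase 1 is aborted by construction and Phase 2 (the BDD-based search based on \BddBasedEqCheck and \DpllGetArcs) is invoked, whose termination follows from \Cref{lemma:bdd-eq-check-correctness}. Otherwise, every formula encountered by Phase 1 has size at most $t$; since the signature $\Y\cup\X\cup\Z$ is finite, there are only finitely many structurally distinct \LTLf formulas of size at most $t$ over it. Because \HashConsingBasedEqCheck collapses structurally equal formulas into a single node, the depth-first AND-OR search can expand only finitely many distinct nodes and therefore terminates.

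For correctness, Phase 2 when invoked is correct by the soundness and completeness of the BDD-based forward synthesis combined with \Cref{lemma:bdd-eq-check-correctness}. When Phase 1 terminates without abortion, I would argue as follows. Soundness of \HashConsingBasedEqCheck is immediate: two structurally equal formulas are propositionally, hence semantically, equivalent, so the equivalence check never identifies genuinely distinct states and therefore cannot introduce spurious loop tags or premature branch failures. Under the size cap $t$, only finitely many structurally distinct formulas are reachable; hence, by the pigeonhole principle, any genuine loop encountered along a search path must eventually manifest as a structural revisit of an already-visited formula, at which point \HashConsingBasedEqCheck detects it. Standard correctness of AND-OR forward search with a sound loop-detection mechanism then applies, and the verdict returned by Phase 1 agrees with the true realizability of $(\varphi,\X,\Y)$.

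The main obstacle I anticipate is closing the gap in Phase 1 between termination and correctness, namely ruling out that a semantic loop missed by hash consing could cause the algorithm to declare a branch as failing when it should succeed. The resolution is precisely the finiteness of the bounded-size formula space: the pathological infinite-growth behavior identified in the proof of \Cref{th:hash-consing-non-termination} can no longer occur once the formula size is capped by $t$, so every loop that matters for the search surfaces as a structural repetition within finitely many steps and is correctly handled by hash consing, reducing Phase 1 to a standard sound-and-complete AND-OR forward search over a finite state space.
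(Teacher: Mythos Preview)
Your proposal is correct and follows essentially the same approach as the paper's (very terse) proof: termination comes from the finiteness of structurally distinct formulas of size at most $t$ together with the guaranteed-terminating BDD fallback (\Cref{lemma:bdd-eq-check-correctness}), and correctness comes from the soundness of structural equality as an equivalence check combined with the general correctness of the AND-OR forward search over a finite graph (which the paper delegates to Theorem~5 of De Giacomo et al.\ 2022). Your elaboration of Phase~1 correctness---arguing that under the size cap every semantic loop eventually surfaces as a structural revisit, so that the unfolded finite graph is game-equivalent to the quotient one---is a reasonable unpacking of what the paper leaves implicit in that citation.
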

\noindent\textit{Proof.} Correctness follows from \cite[Theorem 5]{GiacomoFLVX022}, whereas termination follows from considering that (i) the number of distinct state formulas of size at most $t$ is finite, and (ii) in case the threshold is hit, by the correctness of the BDD-based equivalence check (\Cref{lemma:bdd-eq-check-correctness}). \qed
% \begin{proof}
%     Correctness follows from \cite[Theorem 5]{GiacomoFLVX022}, whereas termination follows from considering that (i) the number of distinct state formulas of size at most $t$ is finite, and (ii) in case the threshold is hit, by the correctness of the BDD-based equivalence check (\Cref{lemma:bdd-eq-check-correctness}).
% \end{proof}

\noindent
% Intuitively,
\Cref{thm:hash-consing-with-threshold} says that the threshold guarantees that only a finite number of structurally equivalent formulas can be computed. Empirically, we found that a good threshold that suitably postpones the detection of pathological instances is three times the size of the initial formula: $t=3\cdot |\varphi|$.

\section{DPLL-based Search Node Expansion}
\label{sec:dpll-based-forward-synthesis}
In this section, we describe our main novel approach for search node expansion, that we argue is the key ingredient in achieving state-of-the-art performances for forward \LTLf synthesis, that allows to overcome some limitations of previous works discussed in \Cref{sec:limitations}.
In particular,
% \begin{itemize}
%     \item \EquivalenceCheck is implemented using BDDs (\Cref{alg:bdd-based-equivalence-check});
%     \item \GetArcs is implemented using a Davis-Putnam-Logemann-Loveland-like (DPLL) procedure (\Cref{alg:dpll-based-get-arcs}).
% \end{itemize}
% 
% \EquivalenceCheck is implemented using BDDs, and
\GetArcs is implemented using a DPLL-based procedure (\Cref{alg:dpll-based-get-arcs}).
% While the knowledge-compilation-based equivalence check is not new, as it is very similar to what has been already done for other forward \LTLf synthesis approaches, 
We claim the DPLL-based \GetArcs to be novel and effective for solving our problem, and it is one of the core contributions of the paper.

\noindent
\textbf{DPLL-based \GetArcs.}
The DPLL algorithm \cite{DavisP60,DavisLL62} is a very famous algorithm for deciding the satisfiability of proposition logic formulas in conjunctive normal form (CNF).
% Apart from several optimizations and improvements (e.g. non-chronological backtracking, conflict-driven learning clause learning), the DPLL procedure is at the core of many state-of-the-art SAT solvers. 
% While the classical DPLL procedure was defined over clausal formulas, 
Many variants of it have been proposed that work for general non-clausal formulas~\cite{ThiffaultBW04,JainC09}, motivated by the fact that, quite often, conversion of a boolean formula to CNF is both unnecessary and undesirable, e.g. because of loss of structural information and due to the worst-case exponential blow-up of the size of the formula. We agree with this view, and in the following we assume to deal with propositionalized \LTLf formulas in non-clausal form.

We are interested in designing a DPLL-like procedure to identify the next moves and successor nodes from a search node $n$.
Our proposed procedure (\Cref{alg:dpll-based-get-arcs}), like any DPLL procedure, runs by choosing a literal, assigning a truth value to it, simplifying the formula and then recursively applying the same procedure to the simplified formula, until there are no agent or environment variables to branch on. Both the computed set of assignments resulting from the sequence of recursive calls, $ass$ (initialized at~\Cref{line:declaration-of-ass}), and what remains of the formula $\phi = \xnf(\FormulaOfNode(n))^p$ after the chosen literals have been replaced with their assigned truth value, are \emph{yielded}  such that they can be consumed by the caller function (see \Cref{line:final-yield-dpll-get-or-arcs} and \ref{line:final-yield-dpll-get-and-arcs}; the instruction $\mathbf{yield}$ allows a generator to provide a value to the caller). 

Given a search node $n$, $\DpllGetArcs$ returns a generator over pairs $(\movetype, \nodetype)$, where \movetype is a mapping from variables to truth values (the absence of a variable is considered a \emph{don't care}), and \nodetype is a \LTLf formula that, as required by ours and \citeauthor{GiacomoFLVX022}'s search framework, represents a search node (either AND or OR). Depending on whether $n$ is an OR-node or an AND-node, the \DpllGetOrArcs function (\Cref{line:first-call-to-dpll-or-arcs}) or the \DpllGetAndArcs function (\Cref{line:first-call-to-dpll-and-arcs}) is called, respectively.
The \DpllGetOrArcs function takes in input a propositionalization of $\psi$, $\phi$, and the current variables' assignment $ass$.
If there is still some agent variable in $\Y$ to assign (\Cref{line:if-still-some-agent-variable-to-assign}), then we decide the next branching literal $\ell$
(by calling the function \GetBranchingLiteral, \Cref{line:get-agent-branching-literal}), we substitute its truth value to the formula $\phi$, and simplify it by calling the function \Replace (\Cref{line:first-replace-agent}), obtaining $\phi_\ell$. Then, we do the recursive call to \DpllGetOrArcs with the new propositionalized formula $\phi_\ell$ and updated assignment $ass\cup\{\ell\}$, and start generating the next moves with a fixed value for literal $\ell$.
Intuitively, this step represents a transition to another node of the search tree of a DPLL algorithm.
The instruction $\yieldfrom$ allows a generator to forward the generation of results to another generating function.
When the generation terminates, the negated literal $\lnot\ell$ is replaced to the original formula $\phi$, yielding another propositionalized \LTLf formula $\phi_{\lnot\ell}$, and the available moves starting from this branch are generated.
Intuitively, the last step represents the exploration of the opposite branch of the current node of the DPLL search tree, with the branching literal $\ell$ set at the opposite truth value $\lnot \ell$.
Note that in the base case, we return the pair $(ass, \phi^\tf)$, where $ass$ contains all the chosen literals in the current final assignment,
and $\phi^\tf$ is the \LTLf formula that represents the next AND node.
The \DpllGetAndArcs is analogous to \DpllGetOrArcs but for AND nodes; therefore, it aims at finding an assignment of environmentvariables $\X$ rather than of agent variables $\Y$. Another difference with \DpllGetAndArcs is that in the base case, we use the propositional formula $\varPsi$ (the result of the substitutions of chosen literals and the subsequent simplifications)
to compute the next search node formula $\psi'$, using the function $\stripnext$, at \Cref{line:strip-next-get-and-arcs}. 
Note that, at this stage, $\varPsi$ is a propositional formula over $\Z$ state variables only.
By \Cref{prop:xnf}, since $\varPsi = \xnf(\psi)^p|^\P_\sigma$, we have that $\psi' = \stripnext(\varPsi) = \fp(\psi, \sigma)$, i.e. the correct next state.

According to the needs of the search algorithm, the procedure can be run exhaustively, i.e. until all available moves from node $n$ have been produced.
Still, the simplification step can possibly avoid a large part of the naive search space over
% controllable and uncontrollable variables
$\Y$ and $\X$;
this is an improvement wrt the \ltlfsyn approach, which blindly enumerates all possible assignments.
The simplification step recursively applies the usual propositional simplification rules, e.g. considering the absorbing or neutral boolean values of binary operators.
% (e.g. if a conjunct is replaced with a false value, then the conjunction it belongs to is replaced with false, whereas if a conjunct is replaced by true, then it is removed from the set of conjuncts).
%
We suggest to simplify the propositional formula to a great extent, but without resorting to any compilations. Instead, we leave the formula in non-clausal form, aiming at eliminating branching variables from the resulting formula. Such variables will be considered as \emph{don't care} in the current assignment.

We argue that such kind of procedures, like the one described in \Cref{alg:dpll-based-get-arcs}, are suitable for our use-case because of their depth-first nature, which implies a low-space requirement, and because of their ``responsive'' nature: a candidate move is proposed in linear time in the number of variables (possibly better thanks to simplifications). 
It is interesting to observe that the full trace of a DPLL execution can seen as a compilation of the propositional theory \cite{HuangD07}.
Note that \Cref{alg:dpll-based-get-arcs} is an abstract specification that can be customized by different realizations of \GetBranchingLiteral and \Replace.

\begin{figure}[t]
\vspace{-0.3cm}
\begin{algorithm}[H]
\footnotesize
\caption{DPLL-based \GetArcs}
\label[algorithm]{alg:dpll-based-get-arcs}
    \begin{algorithmic}[1]
    \Function{\DpllGetArcs}{$n$} \Return $Gen[\movetype, \nodetype]$
        \Let{$\psi$}{$\xnf(\FormulaOfNode(n))$}
        \Let{$ass$}{$\{\}$} \Comment{propositional assignment} \label{line:declaration-of-ass}
        \If{\textsc{IsOrNode($n$)}}
            \State \yieldfrom \DpllGetOrArcs($\psi^p, ass$) \label{line:first-call-to-dpll-or-arcs}
        \Else
            \State \yieldfrom \DpllGetAndArcs($\psi^p, ass$)
            \label{line:first-call-to-dpll-and-arcs}
        \EndIf
    \EndFunction
    % \\
    \Function{\DpllGetOrArcs}{$\phi, ass$} 
    % \Return $Gen[\movetype, \nodetype]$
        % \Statex \hspace{4cm} \Return $Gen[\movetype, \nodetype]$
        \Let{$\Y'$}{$\textsc{GetAgentVars}(\phi)$}\label{line:if-still-some-agent-variable-to-assign}
        \If{$\Y' \neq \emptyset$}
             \Let{$\ell$}{$\GetBranchingLiteral(\phi)$}\label{line:get-agent-branching-literal}
             \Let{$\phi_\ell$}{$\Replace(\phi, \ell)$} \label{line:first-replace-agent}
             \State \yieldfrom $\DpllGetOrArcs(\phi_\ell, ass\cup\{\ell\})$\label{line:or-arcs-first-yield}
             \Let{$\phi_{\lnot\ell}$}{$\Replace(\phi, \lnot\ell)$} \label{line:dpll-get-or-arcs-after-first-yield}
             \State \yieldfrom $\DpllGetOrArcs(\phi_{\lnot\ell}, ass \cup \{\lnot\ell\})$ \label{line:or-arcs-second-yield}
        \Else \Comment{No branching on agent variables available}
            \State \yield $(ass, \phi^{\tf})$ \Comment{$\phi^\tf$ is the next AND node} \label{line:final-yield-dpll-get-or-arcs}
        \EndIf
    \EndFunction
    % \\
    \Function{\DpllGetAndArcs}{$\varPsi, ass$} 
    % \Return $Gen[\movetype, \nodetype]$
        % \Statex \hspace{4cm} \Return $Gen[\movetype, \nodetype]$
        \Let{$\X'$}{$\textsc{GetEnvVars}(\varPsi)$}
        \If{$\X' \neq \emptyset$}
             \Let{$\ell$}{$\GetBranchingLiteral(\varPsi)$}
             \Let{$\varPsi_\ell$}{$\Replace(\varPsi, \ell)$}
             \State \yieldfrom $\DpllGetAndArcs(\varPsi_\ell, ass \cup \ell)$ \label{line:and-arcs-first-yield}
             \Let{$\varPsi_{\lnot\ell}$}{$\Replace(\varPsi, \lnot\ell)$}
             \State \yieldfrom$\DpllGetAndArcs(\varPsi_{\lnot\ell}, ass\!\cup\!\lnot\ell)$\label{line:and-arcs-second-yield}
        \Else \Comment{No branching on environmentvariables available}
            \Let{$\psi'$}{$\stripnext(\varPsi)$} \label{line:strip-next-get-and-arcs}
            \State \yield $(ass, \psi')$ \Comment{$\psi'$ is the next OR node} \label{line:final-yield-dpll-get-and-arcs}
        \EndIf
    \EndFunction
	\end{algorithmic}
\end{algorithm}
\vspace{-0.8cm}
\end{figure}

\begin{lemma}\label{lemma:dpll-get-arcs-correctness}
Let $(\varphi, \X , \Y)$ be a \LTLf synthesis problem instance. \DpllGetArcs correctly expands the search graph.
% for \Cref{alg:sdd-synthesis}.
\end{lemma}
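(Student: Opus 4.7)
The plan is to prove the lemma by structural induction on the depth of the DPLL recursion, separately for \DpllGetOrArcs and \DpllGetAndArcs, and then combine the two via a final appeal to \Cref{prop:xnf}. The statement to be verified is twofold: \emph{soundness} — every pair $(ass, \nodetype)$ yielded corresponds to a legitimate arc in the AND-OR graph; and \emph{completeness} — every arc of $n$ in the AND-OR graph is represented by some yielded pair (where a partial assignment stands for the set of all its completions). Throughout, I would let $\psi$ be the \LTLf formula of $n$ and $\phi=\xnf(\psi)^p$ be its propositionalized \XNF form, recalling that by \Cref{thm:xnf} $\psi$ and $\xnf(\psi)$ are semantically equivalent.

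First, I would establish the key invariant of \Replace: if $\phi'=\Replace(\phi,\ell)$ then for every total assignment $\mu$ extending $\ell$, we have $\mu\models\phi\iff\mu\models\phi'$, and moreover $\phi'$ no longer contains the variable underlying $\ell$. This is a routine propositional simplification fact and implies, by easy induction on the recursion depth of \DpllGetOrArcs, that at each recursive call the pair $(\phi,ass)$ satisfies: $\phi$ is propositionally equivalent to $\xnf(\psi)^p|_{ass}$, and no variable assigned by $ass$ appears in $\phi$. The base case triggers when $\textsc{GetAgentVars}(\phi)=\emptyset$; at that point $ass$ is a (possibly partial) assignment to $\Y$ and the remaining formula $\phi^\tf$ is equivalent — as an \LTLf formula — to what any completion of $ass$ on the remaining agent variables would produce. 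Since these leftover variables do not occur in $\phi$, every completion yields the same AND-node, which is precisely the ``don't care'' semantics that the framework (per the discussion in \Cref{sec:preliminaries:ltlf-syn-as-and-or}) permits.

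The analogous argument for \DpllGetAndArcs handles the environment variables $\X$, with one extra step at the base case: once no $\X$-variable remains, $\varPsi$ is a positive propositional formula on the $\Z$-variables, and the algorithm returns $\stripnext(\varPsi)$. Here I would invoke \Cref{prop:xnf} directly: for any total extension $\sigma\in 2^\P$ of $ass$, we have $\stripnext(\varPsi)\equiv\stripnext(\xnf(\psi)^p|_\sigma^\P)\equiv\fp(\psi,\sigma)$, so the yielded successor is exactly the correct next OR-node formula as expected by the search framework. Completeness follows because the binary branching in lines~\ref{line:or-arcs-first-yield}--\ref{line:or-arcs-second-yield} and \ref{line:and-arcs-first-yield}--\ref{line:and-arcs-second-yield} together with the simplification invariant enumerates every path of the DPLL search tree, and every total assignment to $\Y$ (resp.\ $\X$) is consistent with at least one such path.

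The main obstacle I anticipate is the careful treatment of the partial-assignment (``don't care'') semantics: one must argue that yielding a single pair $(ass,\nodetype)$ with $ass$ partial is equivalent, for the purposes of the surrounding search algorithm, to yielding the family of all its total completions — none of which lead to a different successor. This collapses to showing that once a variable has been eliminated by \Replace, it can never be reintroduced by subsequent recursive calls, and that the final $\phi^\tf$ (resp.\ $\stripnext(\varPsi)$) is independent of the unassigned variables; both facts fall out of the \Replace invariant established in the first step. Together with \Cref{prop:xnf} at the base of \DpllGetAndArcs, this yields the claim.
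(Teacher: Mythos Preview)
Your proposal is correct and follows essentially the same approach as the paper: both establish the soundness/completeness correspondence between yielded pairs and AND-OR arcs by exploiting the commutativity of substitution in \Replace, the exhaustive binary branching on literals, and \Cref{prop:xnf} at the base of \DpllGetAndArcs. The one noteworthy difference is that the paper dispatches the partial-assignment issue with a ``w.l.o.g.\ no don't-care variables'' assumption and argues directly over total assignments, whereas you make the \Replace invariant explicit and use it to show the yielded successor is independent of the unassigned variables; your treatment is slightly more careful on this point, but the underlying argument is the same.
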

\noindent\textit{Proof.}
First, we recall some facts of our AND-OR graph construction. Given an OR-node $n$ whose \XNF state formula is $\psi$, for all $Y\in 2^\Y$ possible agent moves, we have an or-arc $(n,Y,m_i)$, where $m_i$ is an AND-node whose associated formula is $\phi_i=(\xnf(\psi)^p|^\Y_Y)^\tf$, and given an AND-node $m_i$, for all $X\in 2^\X$ possible environment moves, we have an and-arc $(m_i,X,n'_{i,j})$, where $n'_{i,j}$ is the successor node with formula 
% $\psi'_{i,j} = \stripnext(\phi_i^p|_X)$.
$\psi'_{i,j} = \fp(\psi,X\!\cup\!Y)$.
% 
% For simplicity, let $\DpllGetArcs(n)$ and the other generator functions be interpreted as returning the set of all moves, rather than just a generator.
% 
Our claim is the following: there exist an action-response sequence $(n,X,m_i,Y,n'_{i,j})$ iff \myi $(X,m_i)\in\DpllGetOrArcs(n)$, and \myii $(Y,n'_{i,j})\in\DpllGetAndArcs(m_i)$.
% where $p\in ass_I$ iff $p\in I$ and $\lnot p\in ass_I$ iff $p\not\in I$.
%
W.l.o.g., we assume there are no don't care variables; therefore, an assignment $ass_I$ over set of variables $\I$, with $I\in 2^\I$, is such that $p\in I$ iff $p\in ass_I$, and $p\not\in I$ iff $\lnot p\in ass_I$.
% 
% (in fact, not all the variables might occur in a propositionalized state formula, but the assumption simplifies the proof).
% 
Clearly, $\DpllGetOrArcs$ (resp. $\DpllGetAndArcs$) terminates, because the recursive call is on a formula $\phi_\ell$ (resp. $\varPsi_\ell$) whose set of variables $\Y'$ (resp. $\X'$) is strictly smaller than the one of $\phi$ (resp. $\varPsi$), which means that the same variable is considered only once, and so the final \yield at \Cref{line:final-yield-dpll-get-or-arcs} (resp. \Cref{line:final-yield-dpll-get-and-arcs}) is always executed.
($\Rightarrow$) 
Given an or-arc $(Y,m_i)$, by construction of \DpllGetOrArcs, there exist an execution path where for all $y\in Y$,
$y$ is substituted in $\xnf(\psi^p)$ with $\top$, and for all $y'\not\in Y$, $y'$ is substituted with $\bot$ (via \Replace). 
This must be the case because, by construction, for each variables, both alternatives are considered (Lines \ref{line:or-arcs-first-yield}, \ref{line:or-arcs-second-yield}).
Moreover, on such path, the yielded agent move is $(ass_Y,\xnf(\psi)^p|^\Y_Y)^\tf$, since $ass_Y$ by construction is populated according to the choices on the variable assignments, and because the substitution is a commutative operation.
Analogous reasoning holds for the and-arc $(X,n'_{i,j})$ from $m_i$, with the difference that 
the and-arc generated by \DpllGetAndArcs is $(ass_X,\psi'_{i,j})$, where $\psi'_{i,j}=\stripnext(\varPsi)$ (\Cref{line:strip-next-get-and-arcs}), and $\varPsi=\phi_i|^\X_X=\xnf(\psi)^p|^\P_{X\cup Y}$; therefore, by \Cref{prop:xnf}, $\psi'_{i,j}=\fp(\psi,X\cup Y)$, as desired.
($\Leftarrow$)
If the or-arc $(Y,m_i)$ from $n$ is generated by \DpllGetOrArcs, then as shown earlier, by construction, the state formula associated to $m_i$ is $\phi_i$, which by definition of our AND-OR graph is a valid or-arc from $n$.
Similarly, the and-arc $(X,n'_{i,j})$ from $m_i$ generated by $\DpllGetAndArcs$ is such that the state formula associated to $n'_{i,j}$ is $\stripnext(\xnf(\psi^p)|^\P_{X\cup Y})$, by the reasoning above; therefore, by \Cref{prop:xnf}, $n'_{i,j}$ is a valid successor of $m_i$ in the AND-OR graph. \qed
% 
% 
% , and the and-arc $(X,n'_{i,j})$ from $m_i$ are generated,
% \qed
% 
% Correctness of \DpllGetOrArcs can be shown by induction on the number of agent variables $\Y'=\GetAgentVars(\psi^p)$ in a state formula $\psi$.
% For $|\Y'|=0$, then $\psi^p$ does not have controllable variables and, by definition, is an AND node.
% 
% Correctness follows from the observation that, by construction,
% $\varPsi$ at Line 27 is equal to $\xnf(\psi)^p|_\sigma$; putting it together with \Cref{prop:xnf} and Theorem 5 of \cite{GiacomoFLVX022}, we get the thesis. \qed
% 
% By construction of Algorithm 3 and by Proposition 4. 
% The key observation is that, given an agent-environmentmove pair $(act, resp)$ from OR node $n_1$ with successor OR node $n_2$ the combined assignment $\sigma = \sigma_{act} \cup \sigma_{resp}$ is such that $\xnf(\varphi_1)^p|_{\sigma}$ is a  propositional formula over $\Z$ state variables only
% \begin{proof}[Proof Sketch.]
% Correctness follows from the observation that, by construction,
% $\varPsi$ at Line 27 is equal to $\xnf(\psi)^p|_\sigma$; putting it together with \Cref{prop:xnf} and Theorem 5 of \cite{GiacomoFLVX022}, we get the thesis.
% % By construction of Algorithm 3 and by Proposition 4. 
% % The key observation is that, given an agent-environmentmove pair $(act, resp)$ from OR node $n_1$ with successor OR node $n_2$ the combined assignment $\sigma = \sigma_{act} \cup \sigma_{resp}$ is such that $\xnf(\varphi_1)^p|_{\sigma}$ is a  propositional formula over $\Z$ state variables only
% \end{proof}

% By combining the previous results together, we get:

\begin{theorem}
Modified \Cref{alg:sdd-synthesis} with \BddBasedEqCheck for state-equivalence checking and \Cref{alg:dpll-based-get-arcs} for search node expansion is correct and always terminates.
\end{theorem}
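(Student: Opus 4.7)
The plan is to prove the two requirements, termination and correctness, separately, leveraging the lemmas already established earlier in the paper (\Cref{lemma:bdd-eq-check-correctness} and \Cref{lemma:dpll-get-arcs-correctness}) and the general correctness of the underlying AND-OR graph search framework of \citeauthor{GiacomoFLVX022}.

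\textbf{Termination.} First I would argue that the number of distinct search nodes ever generated is finite. By \Cref{lemma:bdd-eq-check-correctness}, \BddBasedEqCheck partitions state formulas into at most $2^{2^{|\O(\cl(\varphi))|}}$ equivalence classes, so the reachable portion of the AND-OR search graph explored by the modified \Cref{alg:sdd-synthesis} is finite. Second, I would show that each individual call to \DpllGetArcs on a node $n$ terminates and yields only finitely many arcs: in the proof of \Cref{lemma:dpll-get-arcs-correctness} it is already observed that each recursive invocation of \DpllGetOrArcs (resp. \DpllGetAndArcs) strictly reduces the set of unassigned agent (resp. environment) variables, so the generator halts after finitely many \yield events, bounded by $2^{|\Y|}\cdot 2^{|\X|}$. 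Combining the finiteness of the search graph with the loop-detection mechanism of \Cref{alg:sdd-synthesis} (the \InPath check together with \textsc{BackProp}) yields overall termination.

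\textbf{Correctness.} For soundness and completeness, I would appeal to the general correctness statement of the AND-OR-graph-based forward synthesis framework (\cite[Theorem 5]{GiacomoFLVX022}), which only requires that the two primitive operations \EquivalenceCheck and \GetArcs implement their specifications faithfully. \Cref{lemma:dpll-get-arcs-correctness} already establishes that \DpllGetArcs generates exactly the or-arcs and and-arcs prescribed by the AND-OR graph induced by formula progression. It remains to justify that \BddBasedEqCheck is a legitimate state-equivalence check in this setting, i.e. that identifying states by propositional equivalence of their \XNF-propositionalizations is compatible with the semantics of \LTLf progression. This follows from \Cref{prop:propequiv}: propositionally equivalent \LTLf formulas progress, under any $\sigma\in 2^\P$, to propositionally equivalent \LTLf formulas, so collapsing such states preserves both winning and losing plays in the underlying game.

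\textbf{Main obstacle.} The only non-trivial step is the last one: making sure that the coarser equivalence induced by BDDs (propositional, rather than the finer SDD-based check used in \cynthia) is still adequate for adversarial reachability, in particular that shortcutting the search upon hitting an already-visited class does not erroneously merge a winning with a losing node. The key observation is that \Cref{prop:propequiv} lifts to full traces (by induction, using the generalized progression $\fp(\varphi,\sigma u)=\fp(\fp(\varphi,\sigma),u)$ and \Cref{prop:acc}), so two propositionally equivalent state formulas accept the same set of continuations and must therefore share the same winning/losing status in the game. With this in place, the hypotheses of \cite[Theorem 5]{GiacomoFLVX022} are met and correctness is inherited, completing the proof. \qed
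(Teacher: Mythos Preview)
Your proposal is correct and follows essentially the same route as the paper: termination via \Cref{lemma:bdd-eq-check-correctness} (finite quotient) together with the loop-handling/termination guarantee of the underlying framework (Theorem~4 of \citealt{GiacomoFLVX022}), and correctness via \Cref{lemma:dpll-get-arcs-correctness} plus Theorem~5 of \citealt{GiacomoFLVX022}; you simply spell out more of the glue than the paper does. One minor remark: the SDD-based check in \cynthia also identifies states up to \emph{propositional} equivalence, so it is not ``finer'' than the BDD-based check---both compute the same equivalence relation, just with different compilation targets---which means your ``main obstacle'' is already handled by \Cref{lemma:bdd-eq-check-correctness} and the correctness argument of \citealt{GiacomoFLVX022} without any additional work beyond \Cref{prop:propequiv}.
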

\noindent\textit{Proof.}
Termination follows from \Cref{lemma:bdd-eq-check-correctness} and Thm. 4 of \cite{GiacomoFLVX022}. Correctness follows from \Cref{lemma:bdd-eq-check-correctness},~\ref{lemma:dpll-get-arcs-correctness}, and Thm. 5 of \cite{GiacomoFLVX022}. \qed
% \begin{proof}
% Termination follows from \Cref{lemma:bdd-eq-check-correctness} and Thm. 4 of \cite{GiacomoFLVX022}. Correctness follows from \Cref{lemma:bdd-eq-check-correctness},~\ref{lemma:dpll-get-arcs-correctness}, and Thm. 5 of \cite{GiacomoFLVX022}.
% \end{proof}

\begin{theorem}
Modified \Cref{alg:sdd-synthesis} with \HashConsingBasedEqCheck and formula size threshold $t$ for state-equivalence checking and \Cref{alg:dpll-based-get-arcs} for search node expansion is correct and always terminates.
\end{theorem}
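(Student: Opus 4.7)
The plan is to assemble this theorem by combining the two previously proved building blocks, namely Lemma~\ref{thm:hash-consing-with-threshold} (correctness and termination of hash-consing with threshold) and Lemma~\ref{lemma:dpll-get-arcs-correctness} (correctness of \DpllGetArcs as a search node expansion), along with the soundness result of the underlying search skeleton (Thm.~5 of~\cite{GiacomoFLVX022}) that the modified \Cref{alg:sdd-synthesis} inherits.

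For \textbf{correctness}, I would argue as follows. The modified \Cref{alg:sdd-synthesis} is parameterized in two orthogonal primitives: \EquivalenceCheck and \GetArcs. By Thm.~5 of~\cite{GiacomoFLVX022}, the search algorithm is sound and complete whenever these two primitives correctly implement their specifications (state-equivalence checking over the induced AND-OR graph, and correct enumeration of arcs). The correctness of \HashConsingBasedEqCheck with threshold $t$ as an \EquivalenceCheck primitive is given by Lemma~\ref{thm:hash-consing-with-threshold}: whenever the threshold is hit, the procedure falls back to \BddBasedEqCheck, whose correctness is guaranteed by Lemma~\ref{lemma:bdd-eq-check-correctness}; and below the threshold, hash-consing is sound by construction (structurally identical formulas are trivially equivalent). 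The correctness of \DpllGetArcs as a \GetArcs primitive, independently of which equivalence check is used, is provided by Lemma~\ref{lemma:dpll-get-arcs-correctness}. Since the two primitives are decoupled in our framework (as emphasized in \Cref{sec:limitations}), their individual correctness guarantees compose: correctness of the overall procedure follows.

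For \textbf{termination}, I would split the argument in two phases, mirroring the control flow induced by the threshold. In the first phase, the algorithm uses \HashConsingBasedEqCheck; since only state formulas of size at most $t$ can be considered equivalent by hash-consing, and there are only finitely many distinct \LTLf formulas of size at most $t$ over a fixed signature, only finitely many search nodes are generated before either the search terminates or some formula exceeds the threshold. In the latter case, the algorithm aborts and restarts in a second phase, using \BddBasedEqCheck and the same \DpllGetArcs expansion; termination in this second phase is precisely the previous theorem of the section, which rests on Lemma~\ref{lemma:bdd-eq-check-correctness} and Thm.~4 of~\cite{GiacomoFLVX022}. For each node, \DpllGetArcs terminates its own enumeration by the inductive argument already made in the proof of Lemma~\ref{lemma:dpll-get-arcs-correctness} (each recursive call strictly decreases the set of unassigned agent/environment variables).

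The only subtle point, and what I expect to be the main obstacle, is verifying that the switch between the two phases does not compromise correctness: in particular, that partial strategy information possibly collected during the first phase is discarded (or re-validated) when the algorithm restarts with \BddBasedEqCheck, so that the final output is produced by a single, globally correct run. Once this is made explicit by observing that the restart in Lemma~\ref{thm:hash-consing-with-threshold} is defined as a fresh invocation of the modified \Cref{alg:sdd-synthesis} with the BDD-based primitive, the composition with \DpllGetArcs is immediate and the theorem follows.
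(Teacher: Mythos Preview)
Your proposal is correct and follows essentially the same approach as the paper: both assemble the result from Lemma~\ref{thm:hash-consing-with-threshold}, Lemma~\ref{lemma:dpll-get-arcs-correctness}, and Theorems~4 and~5 of~\cite{GiacomoFLVX022}. Your version is simply a more detailed unpacking of the paper's terse citation-only proof, including the useful observation that the restart is a fresh invocation (so no stale partial strategy information contaminates the second phase).
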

\noindent\textit{Proof.}
Termination follows from \Cref{thm:hash-consing-with-threshold} and Thm. 4 of \cite{GiacomoFLVX022}. Correctness follows from \Cref{thm:hash-consing-with-threshold},~\ref{lemma:dpll-get-arcs-correctness}, and Thm. 5 of \cite{GiacomoFLVX022}.\qed
% \begin{proof}
% Termination follows from \Cref{thm:hash-consing-with-threshold} and Thm. 4 of \cite{GiacomoFLVX022}. Correctness follows from \Cref{thm:hash-consing-with-threshold},~\ref{lemma:dpll-get-arcs-correctness}, and Thm. 5 of \cite{GiacomoFLVX022}.
% \end{proof}

% \paragraph{\GetBranchingLiteral strategies.}
% 
% Observe that \Cref{alg:dpll-based-get-arcs} can have a different behaviour according to which branching strategy is chosen, i.e. depending on how \GetBranchingLiteral is realized.
% A more thorough analysis of branching heuristics~\cite{Silva99} is left as future work.

\begin{example}
\label{example:dpll-running-example}
% In this running example, we give the intuition about how DPLL-based node expansion (\Cref{alg:dpll-based-get-arcs}) works. 
% 
Consider the \LTLf formula $\varphi=a\lUntil b \wedge \Diamond true$, and let $\X=\{a\}$ and $\Y=\{b\}$.
The \XNF form is $\psi=\xnf(\varphi) = ((b\wedge\Diamond true) \vee (a \wedge \Next(a\lUntil b)) \wedge \Diamond true)$.
We start by calling $\DpllGetOrArcs(\psi^p,\{\})$. 
Let $\lnot a=\GetBranchingLiteral(\psi^p)$ be the first chosen literal.
We replace the chosen literal $\lnot a$ in $\psi^p$ (that is, replacing $a$ with $\mathit{false}$), and by simplifying we get $\phi_{\lnot a}= b\wedge \Diamond\mathit{true}$. 
$\psi_{\lnot a}=\phi_{\lnot a}^\tf$ is the \LTLf formula denoting the next AND node.
Now, $\psi_{\lnot a}$ is the next AND node, and the arc $(\{\lnot a\}, \psi_{\lnot a}\})$ is proposed to the search algorithm. This node is then expanded by $\DpllGetAndArcs(\psi_{\lnot a}^p, \{\})$.
Following analogous steps but for environment variables, we get two children nodes $\psi_{\lnot a, \lnot b} = \stripnext(\ffalse) = \ffalse$, and $\psi_{\lnot a, b} = \stripnext(\Diamond(true)) = \ttrue$.
Since $\ffalse$ is a failure node, also the parent AND node is a failure node; therefore, the search algorithm tries the system move $a$.
The execution ``resumes'' from \Cref{line:dpll-get-or-arcs-after-first-yield}, we generate $\psi_{a}=(b\wedge\Diamond(true))\vee \Next(a\lUntil b)$,
and we get the arc $(\psi_{a},\{a\})$.
The OR nodes generated from $\psi_a$ are $\psi_{a,\lnot b} = \stripnext(\Next(a\lUntil b))=(a\lUntil b)\wedge\Diamond\mathit{true}$, and $\psi_{a,b}=\ttrue$ (after some simplification). 
However, note that $\psi_{a,\lnot b}=\varphi$, which we already visited in the current search path.
Therefore, despite children node $\psi_{a,b}$ is a success node, also this AND node $\psi_a$ is a failure node, and so the initial state $\varphi$.
The specification is unrealizable (as expected, since the environment controls the acceptance condition of the $\lUntil$ operator via the variable $b$.
The AND-OR search graph is shown in \Cref{fig:example-1}.
\end{example}

\begin{figure}[t]
\centering
% \resizebox{0.8\columnwidth}{!}{%
\begin{tikzpicture}
[shorten >=1pt,on grid,auto,transform shape,
scale=0.8,
% transform canvas={scale=0.75},
  or/.style={circle, draw, minimum size=1cm},
  and/.style={rectangle, draw, minimum size=1cm},
  >=stealth]
\node[or] (phi) {$\varphi$};

\node[and, below left=2cm and 1cm of phi] (and1) {$\psi_{\lnot a}$};
\node[or, below= 2cm of and1] (or11) {$\ffalse$};

\node[and, below right=2cm and 1cm of phi] (and2) {$\psi_{a}$};
% \node[or, below left=2cm and 1cm of and2] (or21) {OR};
% \node[or, below right=2cm and 1cm of and2] (or22) {OR};

\node[state,accepting, below=2cm of and2] (or12) {$\ttrue$};

\draw[->] (phi) edge node[midway, above left] {$\lnot a$} (and1);
\draw[->] (phi) edge node[midway, above right] {$a$} (and2);
\draw[->] (and1) edge node[midway, left] {$\lnot b$} (or11);
\draw[->] (and1) edge node[midway, right] {$b$} (or12);
\draw[->] (and2) edge node[midway, left] {$b$} (or12);
% \draw[->] (and2) edge node[midway, right] {$\lnot b$} (or22);
% \draw[->] (and2) edge[start angle=90,end angle=180,out=270,radius=1cm] node[midway, right] {$\lnot b$} (phi);
\draw[->] (and2) edge[bend right=60,out=180,radius=1cm,looseness=2] node[midway, right] {$\lnot b$} (phi);
% \draw[->] (and2) -- ++(1.5, -1) arc[start angle=-20, end angle=140, radius=2cm] node[midway, right] {$\lnot b$} -- (phi);
% \draw[->] (and2.east) -- ++(0, -1) arc[start angle=180, end angle=0, radius=3cm] node[midway, above] {f} -- (phi);
\end{tikzpicture}
% }
\caption{Search graph for \Cref{example:dpll-running-example}, with $\varphi=(a\lUntil b)\wedge\Diamond\mathit{true}$.}
\label{fig:example-1}
\end{figure}

\section{Implementation and Experiments}

\begin{figure*}
    \centering
    \begin{subfigure}[b]{.33\linewidth}
    \centering
    \includegraphics[width=1\linewidth]{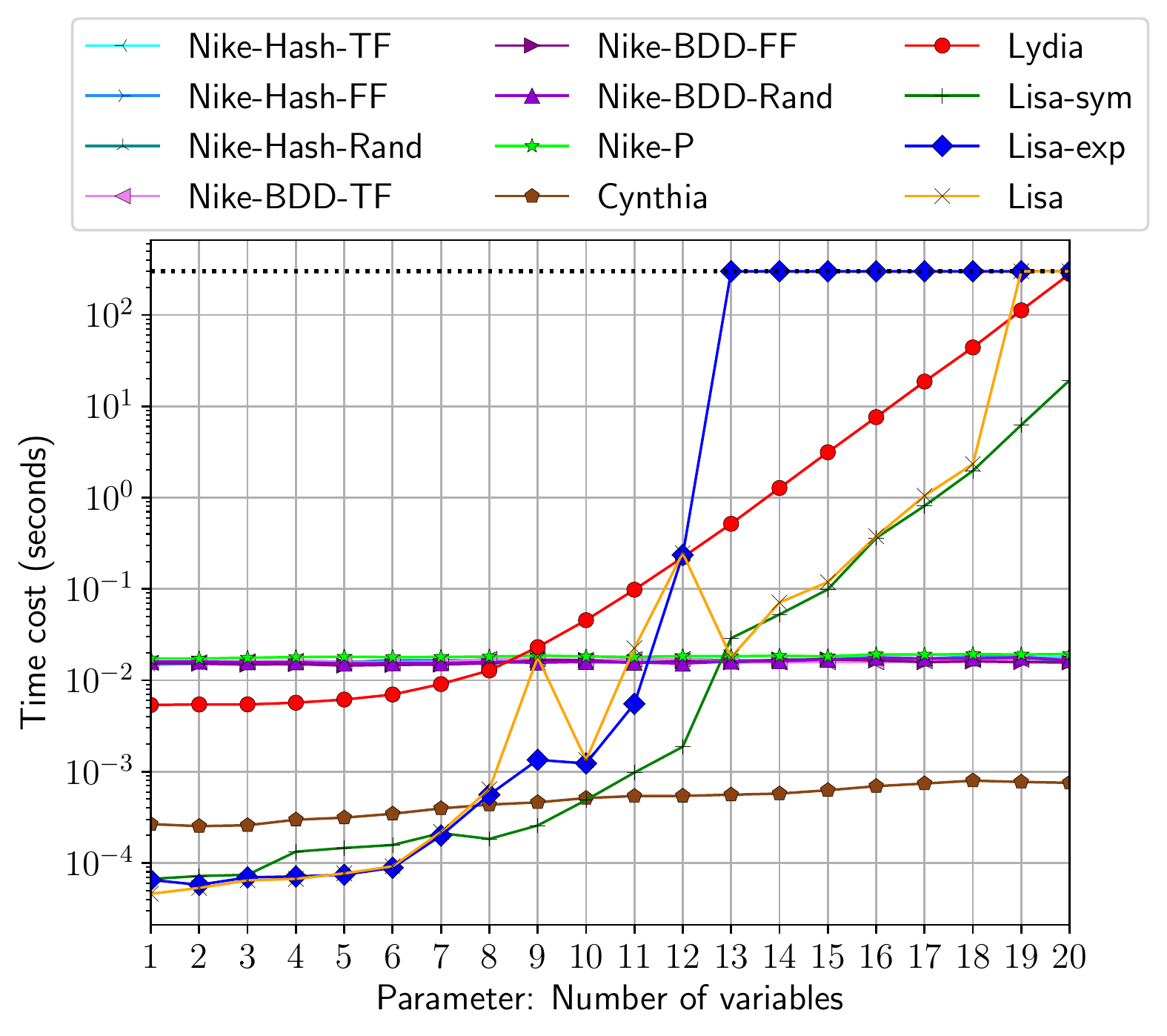}
	\caption{$GF$-pattern.} 
	\label{fig:Gfand}
    \end{subfigure}
    \begin{subfigure}[b]{.33\linewidth}
    \centering
	\includegraphics[width=1\linewidth]{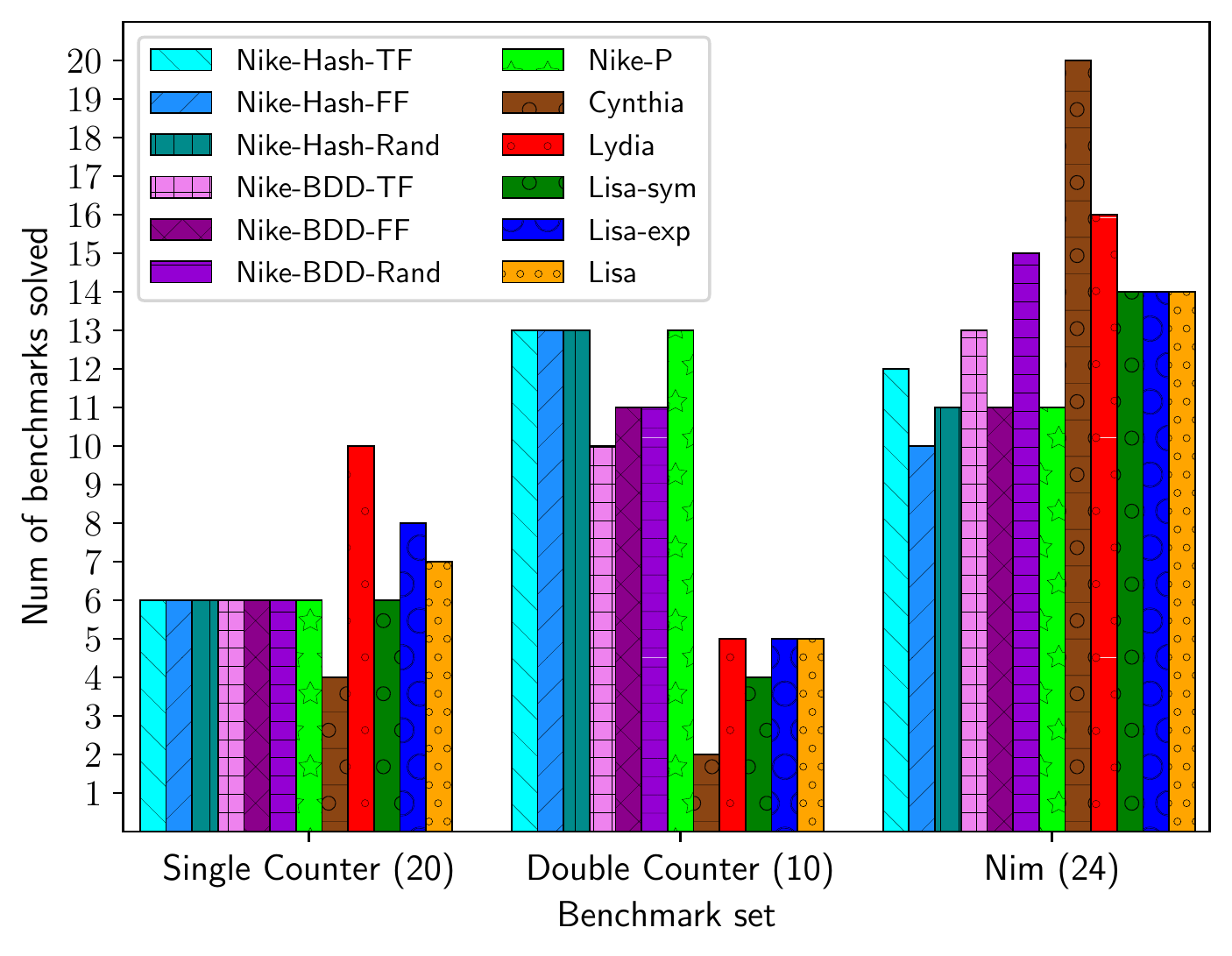}
	\caption{\emph{Two-player-Games}.}
	\label{fig:game}
    \end{subfigure}
    \begin{subfigure}[b]{.33\linewidth}
    \centering
	\includegraphics[width=1\columnwidth]{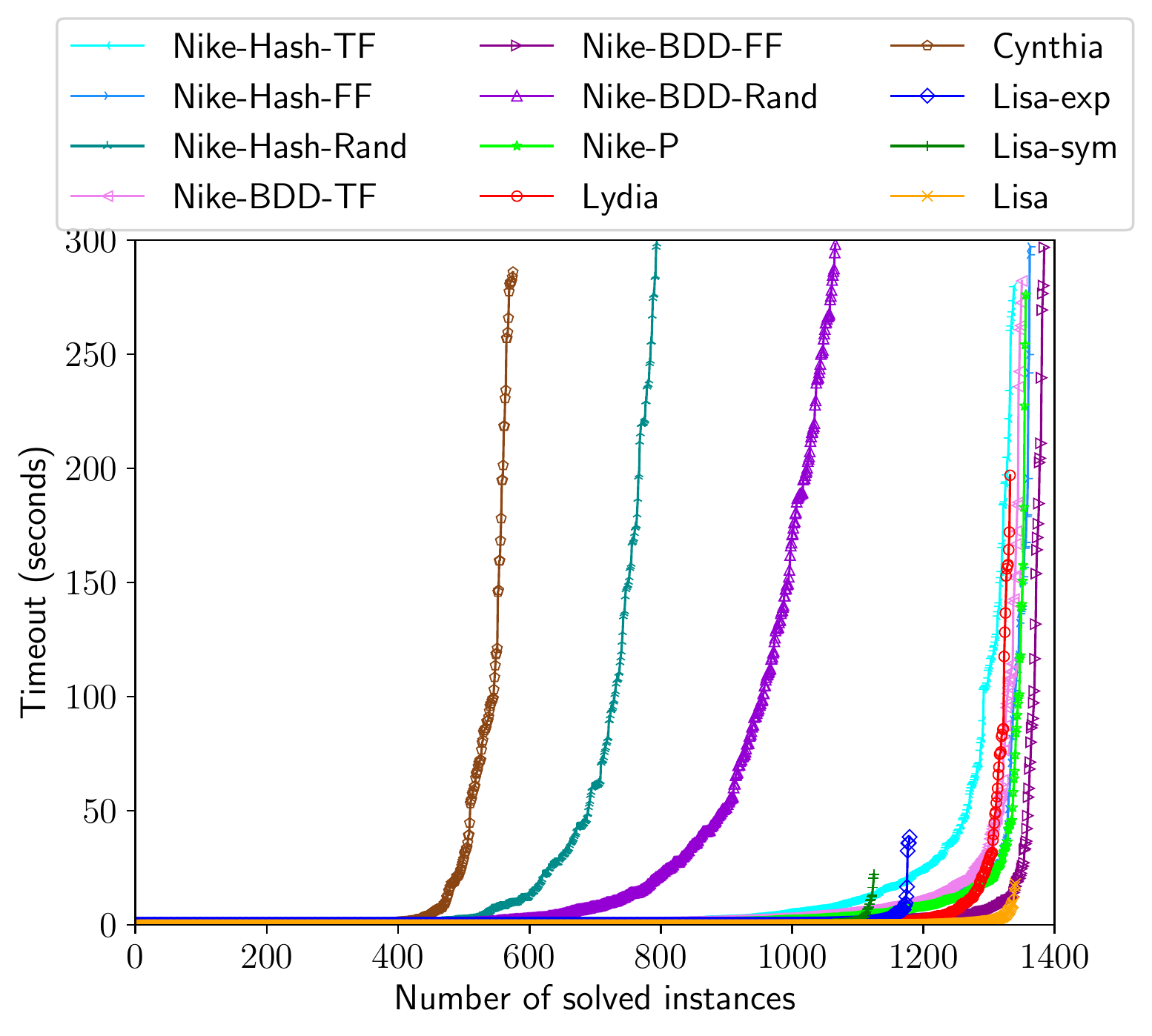}
	\caption{\emph{Random}.}
	\label{fig:random}
    \end{subfigure}
    \caption{Comparison results on all benchmarks.}
\end{figure*}

We implemented the presented synthesis methods in a tool called \nike, which resulted the winner in the \LTLf Realizability Track of SYNTCOMP 2023. \nike is an open-source tool implemented in C++11\footnote{\url{https://github.com/marcofavorito/nike}}.
It uses \syfco to parse the synthesis problems described in TLSF format~\cite{jacobs2023temporal} to obtain the \LTLf specification and the partition of agent/environment propositions. \nike integrates the preprocessing techniques presented in~\cite{XiaoLZSPV21} to perform one-step realizability/unrealizability checks, which is implemented using the BDD library CUDD \cite{cudd}, at the beginning of the synthesis procedure. 
If neither one-step check succeeds, the AND-OR search begins.
Since the procedure is correct and terminates, either the search procedure does not find a winning strategy, in which case the answer to the \LTLf synthesis problem is ``unrealizable'', or a winning strategy is found, and therefore the outcome is ``realizable''.
We use n-ary trees with hash-consing for representing the \LTLf formulas and performing the hash-based state-equivalence checking.
The CUDD library is used for the BDD-based state-equivalence checking.
\nike, as \cynthia and \ltlfsyn, applies some optimizations to speed up the synthesis procedure. First, when visiting an OR-node $n$ for the first time, we perform the pre-processing techniques described in~\cite{XiaoLZSPV21}. More specifically, we check: \myi there exists a one-step strategy that reaches accepting states from $n$, then $n$ is tagged as success; or \myii there does not exist an agent move that can avoid sink state (a non-accepting state only going back to itself) from $n$, then $n$ is tagged as failure.
\nike can run in two modes: using BDD-based state-equivalence checking (\bddbasedeqcheck), and hash-consing-based state-equivalence checking (\hashconsingbasedeqcheck). In the DPLL-based search node expansion, we considered variables in alphabetical order, and we combined them with three simple branching strategies:
\emph{True-first} (\truefirst) that first sets variables to true, \emph{False-first} (\falsefirst) that first sets variables to false; and \emph{Random} (\randomstrategy) that sets variables at random. This yields six combinations of \nike that we included in these experiments. We also include a parallel version of \nike, \nikeparallel, that runs in \hashconsingbasedeqcheck mode all the three branching strategies in parallel.
% Despite the simplicity of the DPLL-based expansion, performances are very surprising, suggesting that there is a lot f
% , i.e. \Cref{alg:sdd-synthesis} with DPLL-based search node expansion and one of BDD-based and hash-consing-based equivalence checks, in a too

% In our experiments, the variables are considered in alphabetical order. We tested our implementation with three simple strategies: 
% \begin{itemize}
%     \item \emph{True-first}: first set variables to true;
%     \item \emph{False-first}: first set variables to false;
%     \item \emph{Random}: set variables at random.
% \end{itemize}
% Despite each of them could fail miserably, in certain cases such branching strategies have been shown to be complementary.

\noindent
\textbf{Experimental Methodology.}
We evaluated the efficiency of all variants of \nike, by comparing against the following tools: \lisa~\cite{lisa} and Lydia~\cite{GiacomoF21} are state-of-the-art backward \LTLf synthesis approaches. Both tools compute the complete \DFA first, and then solve an adversarial reachability game following the symbolic backward computation technique described in \cite{ZhuTLPV17}. 
% We excluded \ltlfsyn from the comparison because it has been already shown to be superseded by \cynthia.
 We excluded \ltlfsyn from the comparison since it was already superseded by \cynthia.

\noindent
\textbf{Experiment Setup.}
Experiments were run on a VM instance on Google Cloud, type \texttt{c2-standard-4}, endowed with Intel(R) Xeon(R) CPU 3.10GHz, 4 logical CPU threads,
16 GB of memory and 300 seconds of time limit.
The correctness of \nike was empirically verified by comparing the results with those from all baseline tools. No inconsistency was found.

\noindent
\textbf{Benchmarks.}
We collected 1494 \LTLf synthesis
instances from literature:
20 unrealizable \emph{$GF$-pattern} and 20 realizable \emph{$U$-pattern} instances, of the form
$GF(n)\!\!\!\!=\!\!\!\!\Box(p_1)\!\!\wedge\!\!\Diamond(q_2)\!\!\wedge\!\!\cdots\!\!\wedge\!\!\Diamond q_n)$ and 
$U(n)\!=\!p_1\lUntil(p_2\lUntil(\dots p_{n-1}\lUntil p_n))$, respectively \cite{DBLP:conf/spin/RozierV07,DBLP:conf/spin/GeldenhuysH06}; 54 Two-player-Games instances \cite{TabajaraV19,lisa}: 
\emph{Single-Counter}, 
where the agent stores an $n$-bit counter (where $n$ is the scaling parameter) which it must increment upon a signal by the environment. The agent wins if the counter eventually overflows to $0$;
\emph{Double-Counter}
is similar to the \emph{Single-Counter} one, except that in this case there are two $n$-bit counters, one incremented by the environment and another by the agent. The goal of the agent is for its counter to eventually catch up with the environment's counter;
and \emph{Nim} is a generalized version of the game of Nim with $n$ heaps of $m$ tokens each \cite{BoutonNimAG}. 
Finally, we considered 1400 \emph{Random} instances, of which 400 are from \cite{ZhuTLPV17} and 1000 from \cite{GiacomoF21}, constructed from \LTL synthesis datasets \emph{Lily} and \emph{Load Balancer} \cite{DBLP:conf/tacas/Ehlers11}.
% 

% \begin{figure}
%     \centering
%     \includegraphics[width=0.6\columnwidth]{images/cactus-plot.pdf}
%     \caption{\emph{Random}.}
%     \label{fig:random}
% \end{figure}
% \begin{figure}
%     \centering
%     \includegraphics[width=0.6\columnwidth]{images/gfand.pdf}
%     \caption{$GF$-pattern.}
%     \label{fig:Gfand}
% \end{figure}
% \begin{figure}
%     \centering
%     \includegraphics[width=0.6\columnwidth]{images/synthesis.pdf}
%     \caption{\emph{Two-player-Games}.}
%     \label{fig:game}
% \end{figure}

\noindent
\textbf{Analysis.}
\Cref{fig:Gfand} shows the running time of each tool on every instance of the \emph{GF-}pattern dataset.
Across these instances, we can observe that all variants of \nike solve instances very quickly, thanks to the pre-processing techniques. This is done with much less time comparing to backward approaches, represented by \lisa and \lydia, simply because these tools do not have such optimizations. \cynthia solved it in less time, but we attribute this to the set up time of the CUDD BDD manager that worsens the performances. Nevertheless, this amounts to a negligible time cost difference of $\ll 1$ second. Results are similar for the \emph{U-}pattern dataset, shown in the supplementary material. 
On the \emph{Two-player-Games} benchmarks, see~\Cref{fig:game}, we observe that \nike variants dominate all other tools on the \emph{Double-Counter} instances, while competing with backward approaches on the other instances. On \emph{Nim}, \cynthia is the best performing tool, but on the other benchmarks \nike shows to be better.
% than \cynthia. 
The \nike-\bddbasedeqcheck combinations performs slightly worse on Double Counter than the \nike-\hashconsingbasedeqcheck combinations.
On the \emph{Random} benchmarks, all variants of \nike, except the ones using \randomstrategy, are competitive with state-of-the-art backward approaches, and far better than \cynthia.

It is clear from the plots that \nike, in general, shows
an overall better performance than \cynthia, illustrating the efficiency and better scalability of our approach. In particular,
there is a notable outperformance of \cynthia on the \emph{Double-Counter} and in the \emph{Random} instances.
We attribute this to the ability of \nike to not be stuck with compilation processes that can easily become intractable, both on hand-designed datasets like \emph{Double-Counter}, and in randomly generated intractable cases.
On the \emph{Nim} benchmark, our tool does not perform as good as the others, but its performance are still competitive, especially in the variant \nike-\bddbasedeqcheck-\falsefirst.
% 
% To explain this, bear in mind that the choice of branching literal is blind and is not based on the current state formula. Instead, \cynthia finds the most compact representation of the transitions and the successor nodes of any AND/OR node, thanks to SDD compression. 
% 
This is because the Nim formulas were manageable enough for SDD compilation (\cynthia) and for DFA construction (\lydia/\lisa), whereas the blind branching strategies of \nike were not effective in this case, as most of the time is spent on generating successors that have been already visited.
The worse performance of the \randomstrategy branching strategy on the \emph{Random} benchmark can be explained by the fact that the \truefirst and the \falsefirst strategies might exploit a particular problem structure of these instances, that allow to easily arrive at success nodes or failure nodes, and frees the algorithm to explore more moves thanks to the short-circuit evaluation of the search outcome (see Lines~\ref{line:short-circuit-negative} and \ref{line:short-circuit-positive} in the modified \Cref{alg:sdd-synthesis}).
The best configuration is \nike-\bddbasedeqcheck-\falsefirst, which suggests that for this benchmark the state compilation is not too hard and the canonicity of the representation helps to prevent the revisit of propositionally-equivalent states.

Overall, despite the simplicity of the DPLL-based expansion, performances are very surprising with respect to backward approaches; this suggests that our approach is very promising and worth of future research.

\section{Conclusions}
We proposed the best forward search \LTLf synthesis approach so far, and the first that is truly competitive with the considered state-of-the-art tools based on backward computation (as in the \emph{Random} benchmark).
Our implementation ranked first in the \LTLf Realizability Track of the 2023 edition of SYNTCOMP.
We think this work sets the foundations for a new family of forward \LTLf synthesis algorithms, and opens several research avenues for investigating effective branching heuristics~\cite{Silva99} for the DPLL-based search graph expansion (e.g. non-chronological backtracking), or better termination strategies for searching with hash-consing-based state-equivalence checking.

\section*{Acknowledgements}
This research has been partially supported by the ERC-ADG WhiteMech (No. 834228). The author  would like to thank Giuseppe De Giacomo and Antonio Di Stasio for their constructive feedback on an earlier draft of this paper.

%% The file kr.bst is a bibliography style file for BibTeX 0.99c
\bibliographystyle{kr}
\bibliography{kr-sample}

\section{Proofs}

\subsection{Algorithm 1 + \HashConsingBasedEqCheck + \DpllGetArcs}
\begin{figure}[H]
  \centering
  \begin{tikzpicture}[scale=0.7, every node/.style={transform shape}]
    \node[state, initial] (q0) {$q_0$};
    \node[state, accepting, right of=q0] (q1) {$q_1$};
    \path   
      (q0) edge[loop above] node{$\lnot b$} (q0)
      (q0) edge[above] node{$b$} (q1)
      (q1) edge[loop above] node{$true$} (q1);
    \end{tikzpicture}
    \caption{Minimal DFA of $\varphi = \Box a\lUntil \Diamond b$}
    \label{fig:dfa-a-until-b}
\end{figure}

\setcounter{theorem}{3}
\begin{theorem}
\label{th:hash-consing-non-termination}
Modified Algorithm 1 with Equation 1 for \EquivalenceCheck and Algorithm 2 for \GetArcs is sound but not complete for \LTLf synthesis.
\end{theorem}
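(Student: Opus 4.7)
The plan is to piggyback on Theorem 2 and Lemma 2 by verifying that substituting the concrete \DpllGetArcs of Algorithm 2 for the abstract \GetArcs does not restore completeness. Soundness is immediate: by Lemma 2, \DpllGetArcs emits exactly the valid or-arcs and and-arcs of the on-the-fly AND-OR graph, and \HashConsingBasedEqCheck is trivially sound since syntactic identity implies propositional equivalence; hence every ``realizable'' verdict of the modified Algorithm 1 is backed by a genuine winning subtree.

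For incompleteness, I would re-use the counter-example of Theorem 2: $\varphi = \Box a \lUntil \Diamond b$, $\Y = \{a\}$, $\X = \{b\}$, with $\sigma = \{a\}$ and $\varphi_n = \fp(\varphi_{n-1}, \sigma)$. By Lemma 2, when \DpllGetOrArcs is invoked on an OR-node carrying $\varphi_n$ and branches with $a := \top$, and subsequently \DpllGetAndArcs branches with $b := \bot$, the yielded successor OR-node has formula $\stripnext(\xnf(\varphi_n)^p|^\P_{\{a\}}) = \fp(\varphi_n, \{a\}) = \varphi_{n+1}$ by Proposition 5. The inductive identity established inside the proof of Theorem 2 guarantees that the $\varphi_n$ are pairwise structurally distinct while remaining semantically equivalent, so the \InPath test implemented via \HashConsingBasedEqCheck never fires and the depth-first recursion descends forever.

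The only delicate point is ensuring that Algorithm 1 actually enters this non-terminating branch instead of being saved by some sibling that terminates first. This turns out not to be a real obstacle: the AND-node for the agent move $Y = \{a\}$ has two environment children, namely $\varphi_{n+1}$ (for $b := \bot$) and a success leaf (for $b := \top$, since $\Diamond b$ is immediately satisfied). Because Algorithm 1's AND-semantics requires \emph{every} child to succeed before the AND-node can be tagged as success, the child leading to $\varphi_{n+1}$ is inevitably inspected under any of the proposed branching strategies (\truefirst, \falsefirst, or \randomstrategy), and the infinite recursion into $\varphi_{n+1}, \varphi_{n+2}, \ldots$ ensues, contradicting completeness.
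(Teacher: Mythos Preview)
Your proof is correct and follows essentially the same route as the paper: the identical counterexample $\varphi=\Box a\lUntil\Diamond b$ with $\Y=\{a\}$, $\X=\{b\}$, the same sequence $\varphi_n=\fp(\varphi_{n-1},\{a\})$, and the same key fact (proved inductively in the paper) that the $\varphi_n$ are pairwise structurally distinct while semantically equivalent, so \HashConsingBasedEqCheck never closes the loop. Your extra paragraph on why the DFS actually descends into this branch goes slightly beyond the paper's own argument; note, however, that it rigorously covers only the AND-level inevitability (both environment responses are visited), while the OR-level choice $a:=\top$ at each $\varphi_n$ is tacitly assumed---this is immediate under the True-first branching strategy, which already suffices for incompleteness, so the conclusion stands.
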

\begin{proof}
Soundness follows from correctness of \DpllGetArcs, from the soundness of hash-consing based equivalence check, and the correctness of Algorithm 1 (Theorem 5 of \cite{GiacomoFLVX022}).

To disprove completeness, we show there exist a synthesis problem $(\varphi, \X, \Y)$ such that the algorithm does not terminate.
Let $\varphi=\Box a \lUntil \Diamond b$, with $\Y=\{a\}$ and $\X=\{b\}$. The equivalent automaton of $\varphi$ is shown in \Cref{fig:dfa-a-until-b}.
Consider any assignment with $b$ set to false, e.g. $\sigma = \{a\}$. The repeated exploration of the agent-env move pair equivalent to $\sigma$ makes the formula progression to produce ever bigger state formulas, hence making the hash-consing-based equivalence check to return false, although the associated state of the minimal DFA is always the same ($q_0$), see again \Cref{fig:dfa-a-until-b}. 

In particular, we prove by induction the following statement. Let $\varphi_0 = \varphi$ and $\varphi_n = \fp(\varphi_{n-1}, \sigma)$. For all $n\ge 1$, we have:
\begin{align*}
\xnf(\varphi_n)=&(((b \wedge \Ftt) \vee \Next\Diamond b) \wedge \Ftt) \\
& \vee (  \\
& \quad \xnf(\varphi_{n-1}) \\
& \quad \wedge (((a\vee\Gff) \wedge \Wnext\Box a) \vee \Gff) \\
& \quad \wedge \Ftt \\
& )
\end{align*}

Base step $n=1$:
the initial state formula in XNF form $\xnf(\varphi)$ is the following:
\begin{align*}
\xnf(\varphi)=&(((b \wedge \Ftt) \vee \Next\Diamond b) \wedge \Ftt ) \\
&\vee \\
&(((a \vee \Gff) \wedge \Wnext\Box a \wedge \Next(\Box a \lUntil \Diamond b)))
\end{align*}

After applying the transformation to move to the next state, $\stripnext(\xnf(\varphi)^p|_\sigma)$, or, equivalently, $\fp(\varphi, \sigma)$, we get a new \LTLf formula, $\varphi_1$, that in XNF form becomes $\xnf(\varphi_1)$:
\begin{align*}
\xnf(\varphi_1)=&(((b \wedge \Ftt) \vee \Next\Diamond b) \wedge \Ftt) \\
& \vee (  \\
& \quad \xnf(\varphi_0) \\
& \quad \wedge (((a\vee\Gff) \wedge \Wnext\Box a) \vee \Gff) \\
& \quad \wedge \Ftt \\
& )
\end{align*}
Note that the original formula $\xnf(\varphi)$ appears in the formula $\xnf(\varphi_0)$. Therefore, the claim holds.

Inductive step. Let the claim hold for all $i\le n$, we need to prove that the claim holds for $n+1$.
By inductive hypothesis, we have that
\begin{align*}
\xnf(\varphi_n)=&(((b \wedge \Ftt) \vee \Next\Diamond b) \wedge \Ftt) \\
& \vee (  \\
& \quad \xnf(\varphi_{n-1}) \\
& \quad \wedge (((a\vee\Gff) \wedge \Wnext\Box a) \vee \Gff) \\
& \quad \wedge \Ftt \\
& )
\end{align*}

Once applying again the same transformation but for formula $\varphi_{n+1}$, i.e. $\fp(\xnf(\varphi_n), \sigma)$, and then applying the XNF, it can be shown that we obtain the formula $\xnf(\varphi_{n+1})$:

\begin{align*}
\xnf(\varphi_{n+1})=&(((b \wedge \Ftt) \vee \Next\Diamond b) \wedge \Ftt) \\
& \vee (  \\
& \quad \xnf(\varphi_{n}) \\
& \quad \wedge (((a\vee\Gff) \wedge \Wnext\Box a) \vee \Gff) \\
& \quad \wedge \Ftt \\
& )
\end{align*}
note that we have a pattern: the new formula contains as a subformula the formulas computed at the previous steps.

Moreover, it can be shown that the formulas are semantically equivalent, i.e. $\varphi_n \equiv \varphi_{n-1}$ for all $n\ge 1$ (e.g. using any \LTLf-to-\DFA tool, like \lydia), and therefore, the search will loop in the same semantically equivalent state, but on structurally different state formulas, hence without progresses of the search. See the script in \texttt{benchmark/proof-theorem-3.py} in the supplementary material.
\end{proof}

\section{Empirical Evaluations}

\subsection*{Benchmarks}
We collected, in total, 1494 \LTLf synthesis instances from literature, consisting of 3 benchmark families: 40 patterned instances from the \emph{Patterns} benchmark family~\cite{XiaoLZSPV21}, split into the $GF$-pattern and $U$-pattern datasets; 54 instances from the \emph{Two-player-Games} benchmark family~\cite{TabajaraV19,BLTV}, split into Single-Counter, Double-Counters and Nim datasets. Since the formulation there assumes that the environment acts first, the \LTLf instances had to be modified slightly to adapt to our setting, where the agent acts first; 1400 randomly conjuncted instances taken from~\cite{ZTLPV17,GiacomoF21}.

\paragraph{Patterns.} There are 20 unrealizable $GF$-pattern instances, and 20 realizable $U$-pattern instances, constructed in the following ways, respectively.
\begin{align*}
    & GF(n) = G(p_1) \wedge F(q_2) \wedge F(q_3) \wedge \ldots \wedge F(q_n) \\
    & U(n) = p_1 U (p_2 U (\ldots p_{n-1} U p_n))
\end{align*}
More specifically, $G$ stands for $\Box$~(Always), $F$ stands for $\Diamond$~(Eventually), and $U$ stands for $\lUntil$~(Until). The variables in the formulas are roughly equally partitioned into $\X$ and $\Y$ at random. In particular, for $GF$-pattern instances, the first variable $p_1$ is always assigned as environment variable such that all generated instances are guaranteed to be unrealizable. Moreover, for $U$-pattern instances, the last variable $p_n~(n \geq 2)$ is always assigned as agent variable such that all generated instances are guaranteed to be realizable.

\paragraph{Two-player-Games.} 
\emph{Single-Counter} is a simple example where the behavior of the agent is completely determined by the actions of the environment. Therefore, the challenge in this family lies mostly in proving that the specification is realizable. The agent stores an $n$-bit counter (where $n$ is the scaling parameter) which it must increment upon a signal by the environment. The agent wins if the counter eventually overflows to 0. To guarantee that the game is winning for the agent, the specification assumes that the environment will send the increment signal at least once every two timesteps.

\emph{Double-Counter} is similar to the \emph{Single-Counter} one, except that in this case there are two $n$-bit counters, one incremented by the environment and another by the agent. The goal of the agent is for its counter to eventually catch up with the environment’s counter. To guarantee that this is achievable, the specification assumes that the environment cannot increment its counter twice in a row.

\emph{Nim} describes a generalized version of the game of Nim~\cite{BoutonNimAG} with $n$ heaps of $m$ tokens each. The environment and the agent take turns removing any number of tokens from one of the heaps, and the player who removes the last token loses.

\paragraph{Random.} This benchmark family has 1400 instances, from which there are 1000 instances from~\cite{ZTLPV17}, and 400 instances from~\cite{GiacomoF21}. The instances in this benchmark family are constructed from basic cases taken from \LTL synthesis datasets Lily~\cite{Lily} and Load balancer~\cite{Unbeast}. Formally, a random conjunction formula $RC(L)$ has the form: $RC(L) = \bigwedge_{1\leq i\leq L}P_i(v_1,v_2,...,v_k),$
where $L$ is the number of conjuncts, or the length of the formula, and $P_i$ is a randomly selected basic case. Variables $v_1,v_2,\ldots,v_k$ are chosen randomly from a set of $m$ candidate variables. Given $L$ and $m$~(the size of the candidate variable set), we generate a formula $RC(L)$ in the following way:
\begin{enumerate}
    \item 
    Randomly select $L$ basic cases;
    \item
    For each case $\varphi$, substitute every variable $v$ with a random new variable $v'$ chosen from $m$ atomic propositions. If $v$ is an environment-variable in $\varphi$, then $v'$ is also an environment-variable in $RC(L)$. The same applies to the agent-variables.
\end{enumerate}

\section{Plots}
Here we provide the full set of experimental results.

\begin{figure}[H]
\centering
\includegraphics[width=0.8\columnwidth]{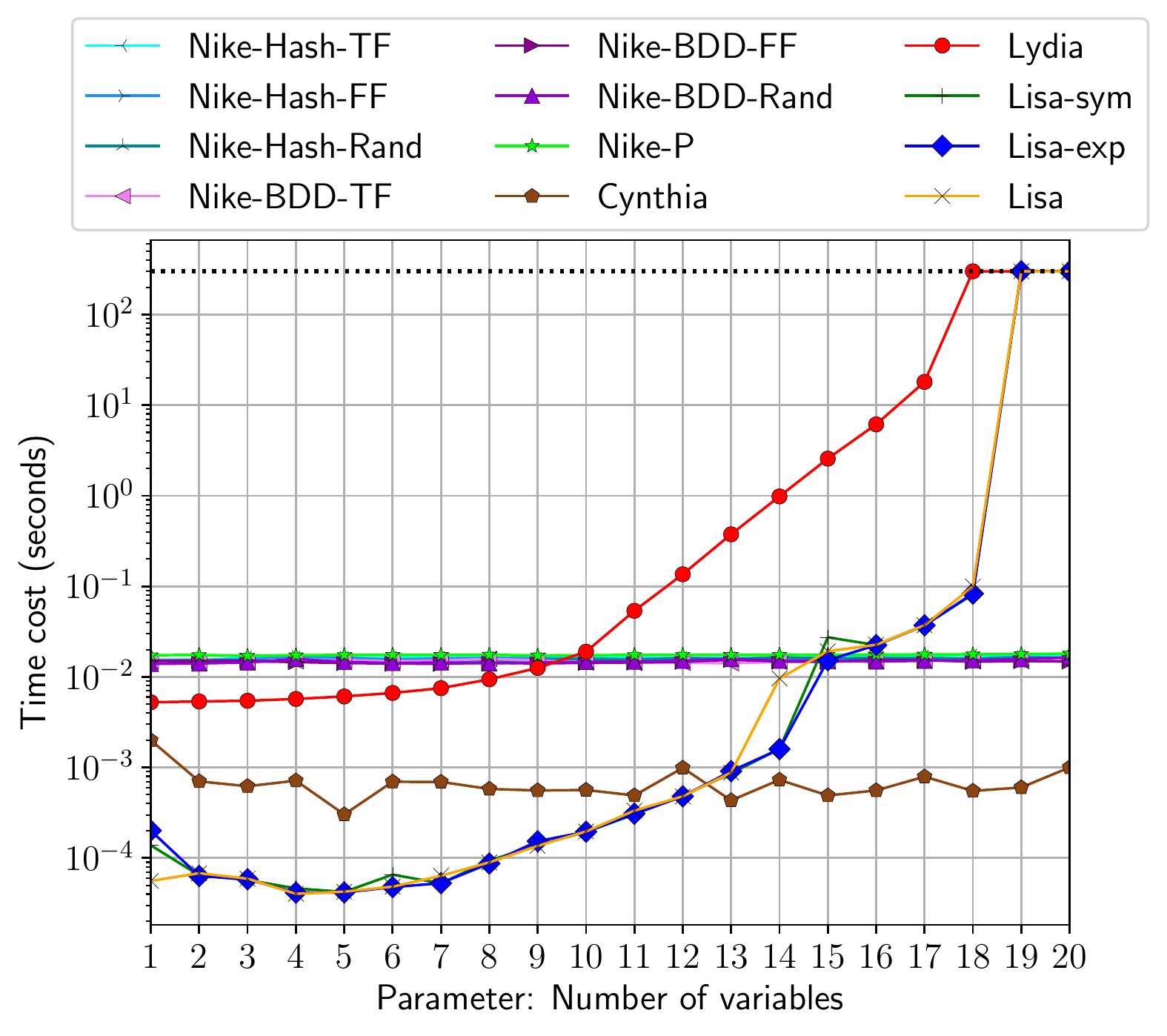}
\caption{$U$-pattern.} 
\label{fig:Uright}
\end{figure}

\begin{figure}[H]
\centering
\includegraphics[width=0.8\columnwidth]{images/gfand.pdf}
\caption{$GF$-pattern.} 
\label{fig:Gfand}
\end{figure}

\begin{figure}[H]
\centering
\includegraphics[width=0.8\columnwidth]{images/synthesis.pdf}
\caption{\emph{Two-player-Games}.}
\label{fig:game}
\end{figure}

\begin{figure}[H]
\centering
\includegraphics[width=0.8\columnwidth]{images/cactus-plot.pdf}
\caption{\emph{Random}.}
\label{fig:random}
\end{figure}

\begin{figure}[H]
\centering
\includegraphics[width=0.8\columnwidth]{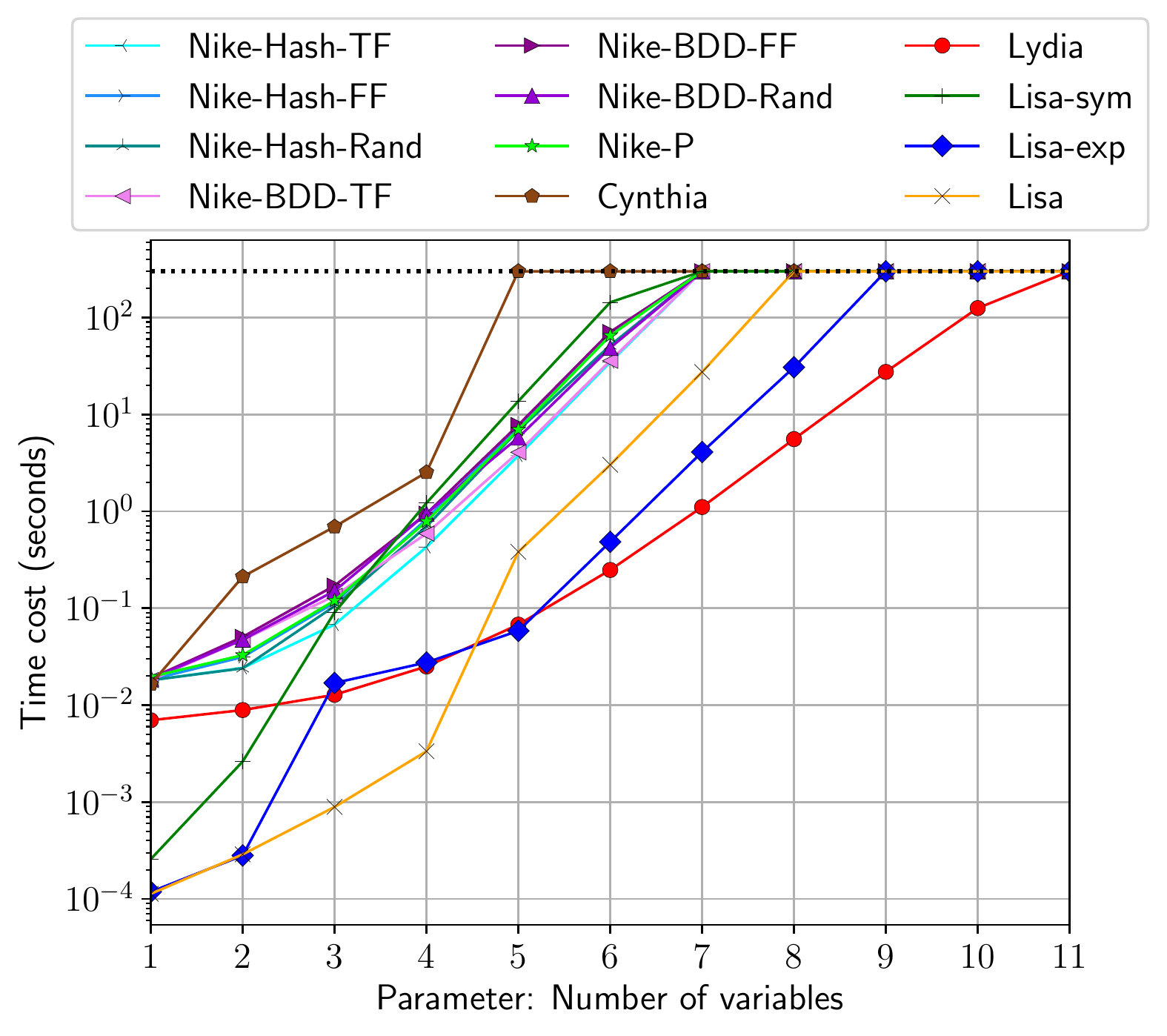}
\caption{Single Counter.} 
\label{fig:single-counter}
\end{figure}

\begin{figure}[H]
\centering
\includegraphics[width=0.8\columnwidth]{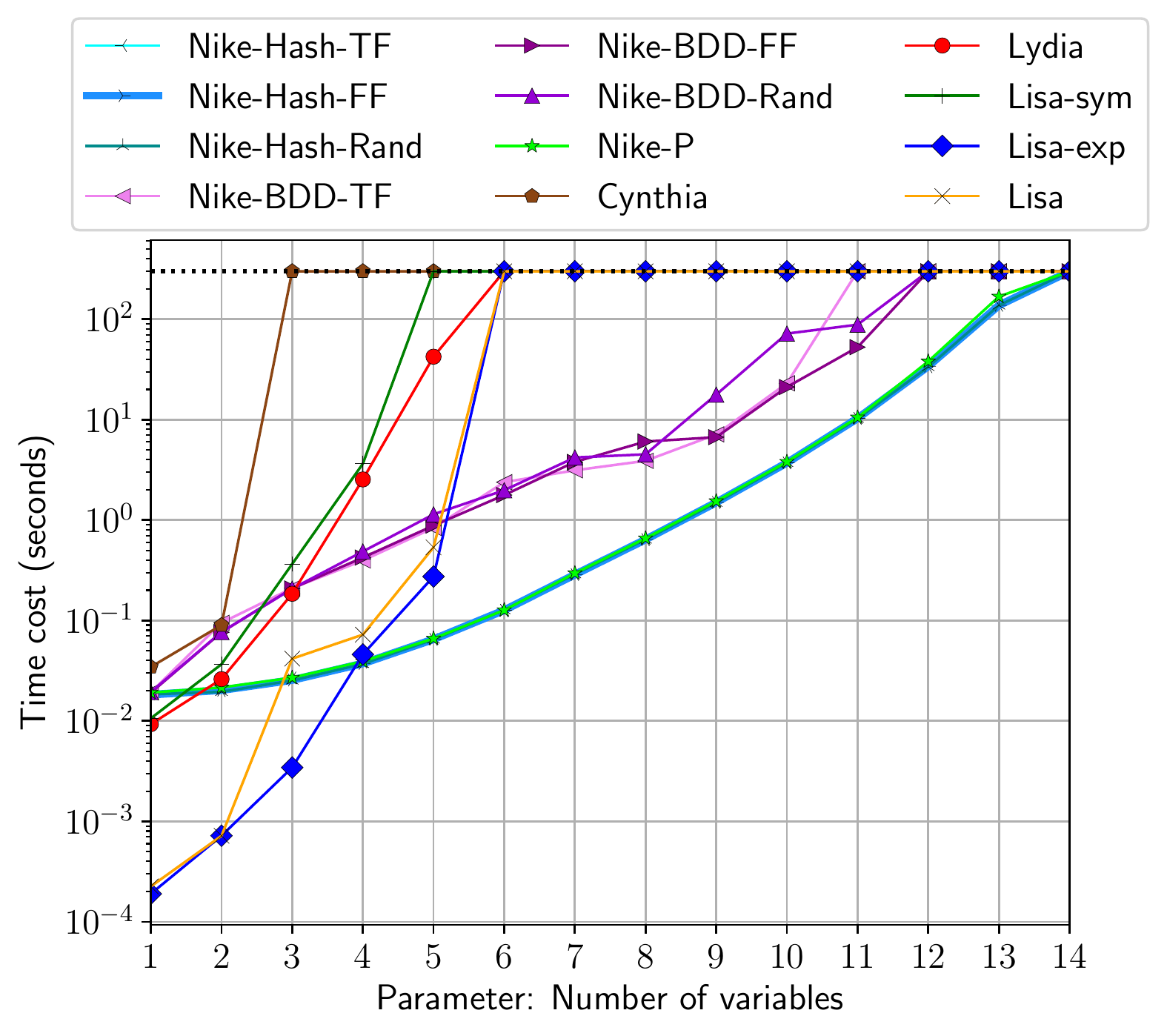}
\caption{Double Counter.} 
\label{fig:double-counter}
\end{figure}

\section{Results}
In this section, we show all the running times for each formula of each dataset.
\input{tables}

\end{document}